\numberwithin{equation}{section}
\newtheorem{theorem}{Theorem}[section]
\newtheorem{lemma}{Lemma}[section]
\newtheorem{corollary}{Corollary}[section]
\newtheorem{definition}{Definition}[section]
\title{Which graph states are useful for quantum~information~processing?}
\author{Mehdi~Mhalla\footnote {CNRS, LIG, Universit\'e de Grenoble, France},~Mio~Murao\footnote {Graduate School of Science, The University of Tokyo, Japan} \footnote{NanoQuine, The University of Tokyo, Japan}, \\Simon~Perdrix\footnotemark[1],~Masato~Someya\footnotemark[2],~and~Peter~S.~Turner\footnotemark[2]}
\date{}
\begin{document}
\maketitle


\newcommand{\kl}{\left|}
\newcommand{\smat}[3]{{#1}|_{#3}^{#2}}
\newcommand{\ket}[1]{\left| #1 \right\rangle}
\newcommand{\bra}[1]{\left\langle #1 \right|}
\newcommand{\kr}{\right\rangle}
\newcommand{\bl}{\left\langle}
\newcommand{\bral}{\langle}
\newcommand{\br}{\right|}
\newcommand{\bs}{\{ 0,1 \}}
\newcommand{\bplus}{\left\langle +\cdot \cdot \cdot + \right|}
\newcommand{\kplus}{\left| +\cdot \cdot \cdot + \right\rangle}
\newcommand{\comm}[1]{ {\bf \hspace{-3cm}#1}}

\begin{abstract}
Graph states~\cite{HEB04} are an elegant and powerful quantum resource for measurement based quantum computation (MBQC). They are also used for many quantum protocols (error correction, secret sharing, etc.). 
The main focus of this paper is to provide a structural characterisation of the graph states that can be used for quantum information processing. 
The existence of a gflow (generalized flow)  \cite{MP08} is known to be a requirement for open graphs (graph, input set and output set) to perform  uniformly and strongly  deterministic computations. We weaken the gflow conditions  to define two new more general kinds of MBQC: uniform equiprobability and cons\-tant probability.  
These classes can be useful from a cryptographic and information point of view because even though we cannot do a deterministic computation in general we can preserve the  information and transfer it perfectly from the inputs to the outputs.
We derive  simple graph characterisations for these classes and  prove that the deterministic and uniform equiprobability classes collapse when the cardinalities of inputs and outputs are the same. We also prove the reversibility of gflow in that case.
The new graphical characterisations  allow us to go from open graphs to  graphs in general and to consider this question: given a graph with no inputs or outputs fixed, which vertices can be chosen as input and output for quantum information processing? We present a characterisation of the sets of possible inputs and ouputs for the equiprobability class, which is also valid for deterministic computations with  inputs and ouputs of the same cardinality. 
\end{abstract}

\section{Introduction}

The graph state formalism~\cite{HEB04} is an elegant and powerful formalism for quantum information processing. Graph states form a subfamily of the stabiliser states \cite{got}. They provide a graphical description of entangled states and they have multiple applications in quantum information processing, in particular in measurement-based quantum computation (MBQC) \cite{RB01}, but also in quantum error correcting codes \cite{got} and in quantum protocols like secret sharing \cite{MS08,KMMP}. They offer a combinatorial approach to the characterisation of the  fundamental properties of entangled states in quantum information processing. The invariance of the entanglement by local complementation of a graph \cite{VdN}; the use of measure of entanglement based on the rank-width of a graph \cite{VdN2};  and the combinatorial \emph{flow} characterisation \cite{BKMP} of deterministic evolutions in measurement-based quantum computation witness the import role of the graph state formalism in quantum information processing. 

In this paper, we focus on the application of graph states in MBQC and in particular on the characterisation of graphs that can be used to perform quantum information processing in this context. The existence of a graphical condition which guarantees that a deterministic MBQC evolution can be driven despite of the probabilistic behaviour of the measurements is a central point in MBQC. It has already been proven that the existence of a certain kind of flow called glfow characterises uniformly stepwise determinism \cite{BKMP}. In section \ref{sec:det}, we introduce a simpler but equivalent combinatorial characterisation using \emph{focused gflow} and we provide a simple condition of existence of such a flow as the existence of a right inverse to the adjacency matrix of the graph. We also prove additional properties in the case where the number of input and output qubits of the computation are the same: the gflow is then reversible and the stepwise condition \cite{BKMP} on determinism is not required to guarantee the existence of a gflow.

  The main contribution of this paper is the weakening of the determinism condition in order to consider the more general class of \emph{information preserving} evolutions. Being information preserving is one of the most fundamental property that can be required for a MBQC computation. 
  Indeed, some non-deterministic evolutions can be information preserving when one knows the classical outcomes of the measurements produced by the computation. Such evolutions are called \emph{equi-probabilistic} -- when each classical outcome  occurs with probability $1/2$ -- or \emph{constant-probabilistic} in the general case. 
  In section \ref{sec:relax}, we introduce simple combinatorial conditions for {equi-proba\-bi\-lis\-tic} and  {constant-probabilistic} MBQC by means of excluded violating sets of vertices. We show, in the particular case where the number of input and output qubits are the same, that graphs guaranteeing equi-probabilism and determinism are the same. 
  In section \ref{sec:choosing}, using this graphical characterisation, we address the fundamental question of finding input and output vertices in an arbitrary graph for guaranteeing an equi-probabilistic (or deterministic) evolution. To this end, we show that the input and output vertices of a graph must form  transversals of the violating sets induced by  the equi-probabilistic characterisation. 
  Finally, in the last section, we  investigate several properties of  the most general and less understood class of constant probabilistic evolutions.

\section{Measurement-based quantum computation}\label{sec:mbqc}

In this section, the main ingredients of measurement based quantum computation (MBQC) are described. More detailed introductions can be found in \cite{DKP,DKPP}.  An MBQC is described by $(i)$ an open graph $(G,I,O)$ ($G$ is a simple undirected graph, $I,O\subseteq V(G)$ are called resp. input and output vertices); $(ii)$ a map $\alpha :  O^C\to [0,2\pi)$, where $O^C := V\setminus O$, which associates with every non ouput vertex an angle; and $(iii)$ two maps $\mathtt x, \mathtt z: O^C \to \{0,1\}^{V(G)}$ called \emph{corrective maps}. A vertex $v\in supp(\mathtt x(u))\cup supp(\mathtt z(u))$ is called a \emph{corrector} of $u$, where $supp(y) = \{u ~|~ y_u =1\}$.  The maps $\mathtt x, \mathtt z$ should be \emph{extensive} in the sense that there exists a partial order $\prec$ over the vertices of the graph s.t. any corrector $v$ of a vertex $u$ is larger than $u$, i.e. $v\in supp(\mathtt x(u))\cup supp(\mathtt  z(u))$ implies  $u\prec v$.

Let $N : \mathbb C^{\{0,1\}^I}\to \mathbb C^{\{0,1\}^{V(G)}}$ be the preparation map which associates with any arbitrary input state located on the input qubits the initial entangled state of the MBQC:  $$N=  \frac{1}{\sqrt{2^{|I^C|}}} \sum_{x\in \{0,1\}^I, y\in \{0,1\}^{I^C}} (-1)^{q(x,y)}\ket {x,y}\bra x$$

 where $q: \{0,1\}^{V(G)}\to \mathbb N = x\mapsto  |E(G)\cap (supp(x)\times supp(x))|$ associates with every $x$  the number of edges of the subgraph $G_x=(V(G)\cap supp(x), E(G) \cap (supp(x)\times supp(x)))$ induced by $x$.
 
The one-qubit measurements, parametrized by an angle $\alpha_u$, of every non-output qubit $u$ are inducing the following projection $P_s(\alpha) : \mathbb C^{\{0,1\}^{V(G)}}\to  \mathbb C^{\{0,1\}^{O}}$ of the entangled state onto the subspace of the output qubits, where  $s\in \{0,1\}^{O^C}$ stands for the classical outcomes of the one-qubit measurements:
\begin{eqnarray*}
P_s(\alpha) &=& \frac{1}{\sqrt {2^{|I^C|}}}\sum_{x\in \{0,1\}O^C,y\in \{0,1\}^O} e^{\alpha_{x\cdot s}}\ket{y}\bra {xy}\end{eqnarray*}
with 
$\alpha_x = \sum_{u\in supp(x)}\alpha(u)$ and $x\cdot s$ is the bitwise conjonction of $x$ and $s$.


Moreover, adaptative Pauli corrections depending on the classical outcomes of the measurements and on the corrective maps, are applied during the computation leading, for any possible classical outcomes $s\in \{0,1\}^{O^C}$, to the following overall (postselected) evolution $\chi_s : \mathbb C^{\{0,1\}^{I}}\to  \mathbb C^{\{0,1\}^{O}}$:
$$\chi_s=P_s(\alpha)\left(\prod_{u\in V(G)}X_{s\cdot \mathtt x(u)}Z_{s\cdot \mathtt z(u)}\right)N$$
where $X_s$ and $Z_s$ are Pauli operators: $X_s=\bigotimes_{u\in supp(s)} X_u$ and $Z_s=\bigotimes_{u\in supp(s)} Z_u$. 

An MBQC is implementing the quantum operation $\{\chi_s\}_{s\in \{0,1\}^{O^C}}$. The evolution is as follows: a classical outcome (also called branch) $s\in \{0,1\}^{O^C}$ is produced and the input state $\ket \phi \in \mathbb C^{\{0,1\}^I}$ is mapped to the state $\chi_s\ket \phi\in \mathbb C^{\{0,1\}^O}$ (up to a normalisation). The probability for an outcome  $s\in \{0,1\}^{O^C}$ to occur is $p_s = ||\chi_s\ket \phi ||^2$.

The overall evolution can be decomposed into several steps, corresponding to a possible implementation of the MBQC model:  first the input state $\ket \phi$ is encoded into the open graph state $\ket {\phi_G} = N\ket \phi$, then the  local measurements (qubit $u$ is measured according the observable $\cos(\alpha(u))X + \sin(\alpha(u))Y$) and the  local Pauli corrections  are performed. This sequence of local operations is done according to the partial order induced by the correction maps  $\mathtt x, \mathtt z$. 

\section{Determinism}\label{sec:det}

\begin{definition}
An MBQC $(G,I,O,\alpha, \mathtt x, \mathtt z)$ is strongly deterministic if all the branches are implementing the same map, i.e. $\exists U$ s.t.  $\forall s\in \{0,1\}^{O^C}$, $\chi_s = \frac1{\sqrt{2^{|O^C|}}}U$. 
\end{definition}

\begin{lemma} 
If an MBQC is strongly deterministic then it implements an isometry.
\end{lemma}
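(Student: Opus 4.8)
The plan is to exploit that the whole family $\{\chi_s\}_{s\in\{0,1\}^{O^C}}$ is the Kraus decomposition of the quantum operation realised by the MBQC, and therefore satisfies the completeness (trace-preservation) relation
\[
\sum_{s\in\{0,1\}^{O^C}} \chi_s^\dagger \chi_s = \mathbb{I},
\]
the identity being understood on the input space $\mathbb C^{\{0,1\}^I}$. Granting this relation, the strong determinism hypothesis makes the conclusion immediate: since $\chi_s = \frac{1}{\sqrt{2^{|O^C|}}}U$ for every one of the $2^{|O^C|}$ branches $s$, each summand equals $\frac{1}{2^{|O^C|}}U^\dagger U$, so that
\[
\mathbb{I} = \sum_{s} \chi_s^\dagger \chi_s = 2^{|O^C|}\cdot \frac{1}{2^{|O^C|}}\,U^\dagger U = U^\dagger U ,
\]
which is exactly the statement that $U$ is an isometry. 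So everything rests on the completeness relation.

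First I would justify that relation operationally: the numbers $p_s = \|\chi_s\ket\phi\|^2$ are the probabilities of the mutually exclusive and exhaustive branches, hence $\sum_s p_s = 1$ for every normalised input $\ket\phi$; writing $\bra\phi \left(\sum_s \chi_s^\dagger\chi_s\right)\ket\phi = 1$ for all such $\ket\phi$ forces $\sum_s \chi_s^\dagger\chi_s = \mathbb{I}$. For a fully self-contained argument I would instead verify the relation from the definitions of Section~\ref{sec:mbqc}: the preparation map $N$ is an isometry (a short computation on its matrix gives $N^\dagger N = \mathbb{I}$), the adaptive Pauli operators $X_{s\cdot\mathtt x(u)}Z_{s\cdot\mathtt z(u)}$ are unitary, and for each fixed $s$ the projector $P_s(\alpha)$ comes from a complete one-qubit projective measurement of every non-output vertex. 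Summing over all outcomes $s$ resolves the identity on the measured qubits, and pre/post-composing with the isometry $N$ and the unitary corrections preserves this.

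The delicate point, which I expect to be the main obstacle, is controlling the interplay between the outcome-dependent Pauli corrections and the measurement projectors. Because the corrective maps $\mathtt x, \mathtt z$ are $s$-dependent, the expression $\sum_s \chi_s^\dagger\chi_s$ is not term-by-term a resolution of the identity, and one cannot simply factor the corrections out of the sum. Here I would invoke the extensiveness of the corrections along the partial order $\prec$: the fact that any corrector of $u$ is measured strictly after $u$ means that the feed-forward corrections only relabel the measurement branches of not-yet-measured qubits, so that the total sum is left invariant and still equals $\mathbb{I}$. Once this bookkeeping is settled, the counting argument of the first paragraph closes the proof.
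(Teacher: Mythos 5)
Your proposal is correct and takes essentially the same approach as the paper: the paper's one-line proof likewise asserts the completeness relation $\sum_{s\in\{0,1\}^{O^C}}\chi_s^\dagger\chi_s = I$ and immediately deduces $U^\dagger U = I$ from the strong determinism hypothesis, exactly as in your first paragraph. Your extra material justifying the completeness relation (trace preservation of the sequential process, with the extensiveness of the corrective maps ensuring each adaptive correction is a unitary conditioned only on already-produced outcomes, so the sum telescopes over the complete single-qubit measurements) is sound and simply fills in what the paper takes for granted.
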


\begin{proof}
Since $\sum_{s\in \{0,1\}^{O^C}} \chi_s^\dagger  \chi_s =I$, $U^\dagger U=I$ so $U$ is an isometry and the MBQC implements the super operator $\rho \mapsto U\rho U^\dagger$.
\end{proof}

In order to point out the combinatorial properties of MBQC, the angles of measurements and the corrective maps can be abstracted away in the following way in order to keep only the influence of the  initial open graph.

\begin{definition}
An open graph $(G,I,O)$ guarantees uniformly strong determinism if $\exists \mathtt x, \mathtt z$ s.t. $\forall \alpha$, $(G,I,O,\alpha, \mathtt x, \mathtt z)$ is strongly deterministic.
\end{definition}

An MBQC is said to guarantee \emph{stepwise} strong determinism if any partial computation is also strongly deterministic.

The gflow of an open graph is defined as follows, based on the use of the odd neighborhood of a set of vertices: for a given subset $S$ of vertices in a graph $G$, $Odd(S):=\{v\in V(G) ~s.t.~ |N(v)\cap S|=1~[2]\}$.

\begin{definition}
$(g,\prec)$ is a  \emph{gflow} of $(G,I,O)$, where $g:  O^C \to 2^{I^c}$, if for any $u$, 
\\~~~~ --- if $ v \in g(u)$, then $u \prec v$;
\\~~~~ --- $u\in  Odd(g(u))$; 
\\~~~~ --- if $v\in  Odd(g(u))$ and $u\neq v$ then $u\prec v$.
\end{definition}

\begin{theorem}[\cite{BKMP}]\label{thm:gflow}
An open graph $(G,I,O)$ guarantees uniform stepwise  strong determinism iff $(G,I,O)$ has a gflow.
\end{theorem}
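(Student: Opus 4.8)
The plan is to establish the two implications separately, the direction $(\Leftarrow)$ constructively and the converse $(\Rightarrow)$ by reading the gflow clauses off the determinism condition. For the direction asserting that a gflow yields uniform stepwise strong determinism, I would use the gflow $(g,\prec)$ to \emph{define} the corrective maps, setting $\mathtt x(u)$ to the indicator of $g(u)$ and $\mathtt z(u)$ to the indicator of $Odd(g(u))\setminus\{u\}$. Extensivity is then immediate: the clauses $v\in g(u)\Rightarrow u\prec v$ and the condition on $Odd(g(u))$ say precisely that every corrector of $u$ is larger than $u$ for $\prec$. The analytic core is a stabiliser identity for the open graph state $N\ket{\phi}$, which is fixed by the generators $K_v=X_vZ_{N(v)}$ attached to non-input $v$: multiplying these over $v\in g(u)$ gives $\prod_{v\in g(u)}K_v=X_{g(u)}Z_{Odd(g(u))}$, since a vertex $w$ collects a $Z$ exactly when $|N(w)\cap g(u)|$ is odd. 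Because $u\in Odd(g(u))$ while $u\notin g(u)$, this stabiliser equals $Z_u$ times the assigned correction $X_{g(u)}Z_{Odd(g(u))\setminus\{u\}}$, so on the graph state the outcome-$1$ byproduct $Z_u$ of the $X$--$Y$-plane measurement of $u$ is undone exactly by that future correction. Propagating branch by branch shows $\chi_s$ is independent of $s$ up to the factor $1/\sqrt{2^{|O^C|}}$, i.e. strong determinism; as the correction is a fixed Pauli string with no angle dependence, the argument runs verbatim for every $\alpha$ (uniformity) and, restricted to any $\prec$-down-closed set of measured vertices, for every partial computation (stepwise).

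For the converse I would extract the gflow clauses directly from the identity $\chi_s=\chi_0$ on $N\ket{\phi}$, using that it must hold for \emph{every} angle map $\alpha$. The angle-dependent phases carried by $P_s(\alpha)$ can be matched only if the accompanying byproduct $X_{s\cdot\mathtt x(u)}Z_{s\cdot\mathtt z(u)}$ reduces, on the graph state, to the outcome byproduct it is meant to cancel; concretely, for each measured $u$ the operator $Z_u\,X_{\mathtt x(u)}Z_{\mathtt z(u)}$ must stabilise $N\ket{\phi}$, and the \emph{stepwise} hypothesis is what licenses imposing this per vertex rather than only for a whole branch. Since the graph-state stabiliser group is generated by the $K_v$ with non-input $v$, the unique stabiliser with $X$-support $\mathtt x(u)$ is $\prod_{v\in\mathtt x(u)}K_v=X_{\mathtt x(u)}Z_{Odd(\mathtt x(u))}$, so matching $Z$-parts forces $Odd(\mathtt x(u))=\mathtt z(u)\cup\{u\}$ and in particular the second gflow clause $u\in Odd(\mathtt x(u))$. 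Setting $g:=\mathtt x$, the remaining clauses and the order $\prec$ are already furnished by the extensivity of $\mathtt x,\mathtt z$ built into the MBQC definition, so $(g,\prec)$ is a gflow.

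The subtle point, which I expect to absorb most of the work, is the passage in the converse from ``all branches agree for every $\alpha$'' to ``the per-vertex correction is a stabiliser''. Carrying it out needs the explicit form of $P_s(\alpha)$ together with a separation-of-angles argument: distinct monomials in the angle exponentials are independent, so $\chi_s=\chi_0$ decomposes coefficient-wise and only the stabilising corrections survive. Tracking which $Z$'s fall on already-measured qubits, where they must cancel against the $Z_u$ byproduct, versus not-yet-measured qubits, where they become the propagated correction recorded in $\mathtt z$, is the delicate bookkeeping --- and it is exactly this split that separates the clause $u\in Odd(g(u))$ from the extensivity clauses of a gflow.
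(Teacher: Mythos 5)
The paper contains no proof of this theorem --- it is imported from \cite{BKMP} as a known result --- so the only meaningful comparison is with the original argument there, and your reconstruction follows essentially that same route: sufficiency by using the stabiliser identity $\prod_{v\in g(u)}K_v=X_{g(u)}Z_{Odd(g(u))}$ so that the outcome-$1$ byproduct $Z_u$ (legitimate since $u\in Odd(g(u))$, $u\notin g(u)$) is undone by the correction $X_{g(u)}Z_{Odd(g(u))\setminus\{u\}}$, with the gflow order guaranteeing corrections land only on unmeasured qubits; necessity by varying all angles to separate coefficients and force $Z_u X_{\mathtt x(u)}Z_{\mathtt z(u)}$ into the stabiliser group generated by $\{K_v\}_{v\in I^C}$, then reading off $g:=\mathtt x$ with the order supplied by the extensivity of the corrective maps. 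Your outline is sound; the two points you leave implicit are routine but worth noting: the extraction forces $\mathtt x(u)\subseteq I^C$ (otherwise no element of the subspace stabiliser has that $X$-support, which is exactly why $g$ maps into $2^{I^C}$), and the Pauli products carry signs from overlaps $g(u)\cap Odd(g(u))$ that the strict (phase-free) definition of strong determinism must be checked to absorb.
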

\subsection{Focused gflow}

Since the gflow is not unique we introduce a stronger version called focused gflow, which is unique if the number of inputs and outputs are the same. The focused gflow gives rise to a simpler characterisation of uniform stepwise  strong determinism. The focused gflow is based on the use of \emph{extensive} maps.
\begin{definition} $g: O^C \to 2^{I^C}$ is a focused gflow of $(G,I,O)$ if $g$ is extensive -- i.e. the transitive closure of the relation $\{(u,v) ~s.t.~ v\in g(u)\}$ is a partial order over $V(G)$ -- and $\forall u\in O^C$, $Odd(g(u))\cap O^C = \{u\}$
\end{definition}

\begin{theorem}\label{thm:focused}
An open graph $(G,I,O)$ guarantees uniform stepwise  strong determinism iff $(G,I,O)$ has a focused gflow.
\end{theorem}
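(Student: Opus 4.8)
The plan is to reduce to Theorem~\ref{thm:gflow}: since it already identifies uniform stepwise strong determinism with the existence of a gflow, it suffices to prove that $(G,I,O)$ admits a gflow if and only if it admits a focused gflow. I would treat the two implications separately, the conversion of an arbitrary gflow into a focused one being the substantial direction.

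For the easy implication, suppose $g$ is a focused gflow and let $\prec_g$ be the transitive closure of $\{(u,v) \mid v\in g(u)\}$, which is a partial order by extensiveness. The first two gflow conditions are immediate: $v\in g(u)$ gives $u\prec_g v$ by construction, and $Odd(g(u))\cap O^C=\{u\}$ gives $u\in Odd(g(u))$. The only point needing care is the third gflow condition when $v\in Odd(g(u))$ is an output, since the focused condition forces any $v\in Odd(g(u))$ with $v\neq u$ to lie in $O$. I would handle this by augmenting $\prec_g$ with every pair $(u,v)$ such that $v\in Odd(g(u))\cap O$ and taking the transitive closure. Because each added pair points into $O$, and output vertices are never in the domain $O^C$ of $g$ (hence are the source of no relation), no directed cycle can be created, so the augmented relation is still a partial order and witnesses a gflow.

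For the main implication, the key algebraic fact is that the odd-neighbourhood map is linear over $\mathbb{F}_2$, i.e. $Odd(A\triangle B)=Odd(A)\triangle Odd(B)$ for the symmetric difference $\triangle$. Given a gflow $(g,\prec)$, I would build a focused gflow $g'$ by induction on $\prec$, processing vertices from the maximal elements of $O^C$ downwards. If $u$ is maximal in $O^C$, then any $v\in Odd(g(u))\cap O^C$ distinct from $u$ would satisfy $u\prec v$ with $v\in O^C$, contradicting maximality, so $Odd(g(u))\cap O^C=\{u\}$ already and I set $g'(u)=g(u)$. For a general $u$, write $Odd(g(u))\cap O^C=\{u,v_1,\dots,v_k\}$; the third gflow condition gives $u\prec v_i$, so each $v_i$ has already been focused, and I set $g'(u)=g(u)\triangle g'(v_1)\triangle\cdots\triangle g'(v_k)$. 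Linearity of $Odd$ then yields $Odd(g'(u))\cap O^C=\{u,v_1,\dots,v_k\}\triangle\{v_1\}\triangle\cdots\triangle\{v_k\}=\{u\}$, which is exactly the focused condition.

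The hard part is verifying that $g'$ remains extensive, i.e. that repeatedly folding correction sets together by symmetric difference does not create order cycles. I expect this to follow from the observation that every corrector in $g'(u)$ lies strictly above $u$: elements coming from $g(u)$ satisfy $u\prec v$ by the first gflow condition, while elements coming from some $g'(v_i)$ lie above $v_i\succ u$ by the induction hypothesis. Hence the relation $\{(u,v)\mid v\in g'(u)\}$ is contained in the original order $\prec$, so its transitive closure is a sub-order of $\prec$ and remains a partial order. The remaining routine checks are that $g'(u)\subseteq I^C$ (which is stable under $\triangle$) and that the induction is well founded on the finite order $\prec$.
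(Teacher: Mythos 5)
Your proposal is correct and follows essentially the same route as the paper: reduce to Theorem~\ref{thm:gflow}, convert a focused gflow into a gflow directly, and focus an arbitrary gflow by induction along $\prec$ (the paper uses depth to the outputs, which is the same induction), folding in already-focused sets via symmetric difference and the $\mathbb{F}_2$-linearity $Odd(A\oplus B)=Odd(A)\oplus Odd(B)$, with extensiveness preserved because the induced relation is contained in $\prec$. Your treatment of the easy direction is in fact slightly more careful than the paper's, which tacitly assumes the partial order can be taken with outputs maximal, whereas you explicitly augment $\prec_g$ with the pairs $(u,v)$ for $v\in Odd(g(u))\cap O$ and verify acyclicity.
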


\begin{proof} We prove that $(G,I,O)$ has a gflow iff it has a focused gflow. First, assume $g$ is a focused gflow, and let $\prec$ be the transitive closure of $\{(u,v) ~s.t.~ v\in g(u)\}$. $\prec$ is a partial order and by definition, if $v\in g(u)$ then $u\prec v$. Moreover $u\in Odd(g(u))=\{u\}$. Finally, if $v \in Odd(g(u))$ and $v\neq u$ then $v\in O$, so there is no element larger than $v$ by definition of $\prec$. Thus $(g,\prec)$ is a gflow.  
Now, assume $(g,\prec)$ is a gflow. We call the depth of a vertex  $u$ its distance to the output, i.e. the length of longest strictly increasing sequence $u\prec u_1\prec..\prec u_k$ s.t. $u_k\in O$. We construct a focus gflow $g_f$ by induction on the depth of the vertices. If $u$ is of depth $1$ then $g_f(u):=g(u)$. If $u$ is of depth larger than $2$, let $g_f(u):=g(u)\bigoplus_{v\in Odd(g(u))\cap O^C, v\neq u} g_f(v)$, where $\oplus$ is the symmetric difference: $A\oplus B = (A\cup B)\setminus (A\cap B)$.  Since $Odd(A\oplus B)=Odd(A)\oplus Odd(B)$, $Odd(g_f(u))\cap O^C = (Odd(g(u)) \bigoplus_{v\in Odd(g(u))\cap O^C, v\neq u} Odd(g_f(v)))\cap O^C=(Odd(g(u))\cap O^C)\oplus (Odd(g(u))\setminus \{u\})\cap O^C) = \{u\}$. Moreover $g_f$ is extensive since the relation $R$ induced by $g_f$ is s.t. $u R vÊ\implies u\prec v$ so the transitive closure of $R$ is a partial order. 
\end{proof}

\subsection{Induced adjacency matrix and reversibility}

We introduce the notion of induced adjacency matrix of an open graph and show that an open graph has a gflow if and only if its induced matrix has a DAG as right inverse.

\begin{definition}
The induced adjacency matrix of an open graph $(G,I,O)$ is the submatrix $\smat{A_{G}}{O^C}{I^C}$
 of the adjacency matrix $A_G=\{m_{u,v}, (u,v)\in V(G)\}$ of $G$ removing the rows of $O$ and column of $I$, i.e. $\smat{A_{G}}{O^C}{I^C} = \{m_{u,v}, (u,v)\in O^C\times I^C\}$. 
\end{definition}

The induced matrix $\smat{A_{G}}{O^C}{I^C}$ is the matrix representation of the linear map $W\mapsto Odd(W)\cap I^C$ which domain is $2^{O^C}$ and codomain is $2^{I^C}$.

\begin{theorem}\label{thm:rightinv}
$(G,I,O)$ has a gflow iff there exists a DAG\footnote{DAG: Directed Acyclic Graph} $F=(V(G),E)$ s.t. $$\smat{A_G}{O^C}{I^C}. \smat{A_F}{I^C}{O^C} = I$$
\end{theorem}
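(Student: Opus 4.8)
The plan is to prove the statement by translating it, via Theorem~\ref{thm:focused}, into a statement about focused gflows, and then reading the two defining conditions of a focused gflow off of the two required properties of the matrix $\smat{A_F}{I^C}{O^C}$: that it is a right inverse of $\smat{A_G}{O^C}{I^C}$, and that it is the restricted adjacency matrix of a DAG. All arithmetic is over $\mathbb F_2$, since $Odd$ counts neighbours modulo $2$.

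First I would fix a dictionary between maps $g:O^C\to 2^{I^C}$ and matrices. To a map $g$ I associate the matrix $B$ with rows indexed by $I^C$ and columns by $O^C$, whose column indexed by $u\in O^C$ is the characteristic vector of $g(u)\subseteq I^C$; conversely any such matrix defines a map $g$. Simultaneously, $B$ is the submatrix $\smat{A_F}{I^C}{O^C}$ of the adjacency matrix of the directed graph $F=(V(G),E)$ whose edges encode the relation $\{(u,v)~s.t.~v\in g(u)\}$. The key observation is that $F$ is a DAG if and only if this relation has acyclic transitive closure, which is exactly the extensivity of $g$; indeed a finite directed graph is acyclic precisely when its transitive closure is a (strict) partial order.

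The computational heart of the argument is the product $\smat{A_G}{O^C}{I^C}\cdot B$. Since $\smat{A_G}{O^C}{I^C}$ has entry $(A_G)_{w,v}=[w\sim v]$ for $w\in O^C$ and $v\in I^C$, the entry of the product at position $(w,u)$ is $\sum_{v\in g(u)}[w\sim v]=|N(w)\cap g(u)|\bmod 2$, which equals $1$ exactly when $w\in Odd(g(u))$. Hence the column indexed by $u$ of $\smat{A_G}{O^C}{I^C}\cdot B$ is the characteristic vector of $Odd(g(u))\cap O^C$. Therefore $\smat{A_G}{O^C}{I^C}\cdot B=I$ holds if and only if $Odd(g(u))\cap O^C=\{u\}$ for every $u\in O^C$, which is precisely the odd-neighbourhood condition of a focused gflow.

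Combining the two equivalences yields the result: the existence of a DAG $F$ with $\smat{A_G}{O^C}{I^C}\cdot \smat{A_F}{I^C}{O^C}=I$ is equivalent to the existence of an extensive $g$ with $Odd(g(u))\cap O^C=\{u\}$ for all $u$, i.e. a focused gflow, and by Theorem~\ref{thm:focused} this is equivalent to $(G,I,O)$ having a gflow. I expect the main obstacle to be purely a matter of bookkeeping: keeping the row and column index sets ($O^C$ versus $I^C$) and the edge-direction convention for $F$ consistent so that extensivity matches acyclicity, and checking that any edges of $F$ lying outside $O^C\times I^C$ are irrelevant both to the matrix $\smat{A_F}{I^C}{O^C}$ and to the acyclicity argument (in the forward direction one may always take $F$ to carry only the edges coming from $g$, and in the backward direction acyclicity of $F$ is inherited by the extracted relation).
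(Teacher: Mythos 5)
Your proof is correct and follows essentially the same route as the paper's: both reduce the statement to focused gflows via Theorem~\ref{thm:focused}, encode $g$ as the directed graph $F$ (so that the column of $\smat{A_F}{I^C}{O^C}$ indexed by $u$ is the characteristic vector of $g(u)$, and extensivity corresponds exactly to $F$ being a DAG), and verify columnwise that the right-inverse condition is equivalent to $Odd(g(u))\cap O^C=\{u\}$. Your explicit attention to the edge-direction convention and to edges of $F$ outside $I^C\times O^C$ is a welcome tightening of bookkeeping the paper leaves implicit.
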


\begin{proof}
(\emph{only if}) Assume $(G,I,O)$ has a gflow. Thanks to lemma \ref{thm:focused} w.l.o.g. $(G,I,O)$ has a focused gflow $g_f$. Let $F=(V(G),E)$ be a directed graph s.t. $(u,v)\in E(F) \iff v\in g_f(u)$. 
Notice that  $\forall u\in O^C$, $\smat{A_F}{I^C}{O^C}1_{\{u\}} = 1_{g_f(u)}$ where $1_X$ is a binary vector s.t. $({1_X})_u =1 \iff u\in X$. Moreover, since $g_f$ is extensive, $F$ is a DAG. Thus $\smat{A_G}{O^C}{I^C}\smat{A_F}{I^C}{O^C} 1_{\{u\}} = \smat{A_G}{O^C}{I^C}1_{g(u)}= 1_{Odd(g_f(u))\cap O^C}=1_{\{u\}}$. (\emph{if}) Assume $F=(V(G),E)$ be a DAG s.t. $\smat{A_G}{O^C}{I^C}. \smat{A_F}{I^C}{O^C} = I$, then let $g:O^C\to 2^{I^C}= u\mapsto N_F^+(u)$. Since $F$ is a DAG, $g$ is extensive, and $1_{Odd(g(u))\cap O^C}=\smat{A_F}{I^C}{O^C}(1_{g(u)})= \smat{A_G}{O^C}{I^C}\smat{A_F}{I^C}{O^C} 1_{\{u\}} = 1_{\{u\}}$, so $Odd(g(u))\cap O^C=\{u\}$. 
\end{proof}

Thus, according to theorem \ref{thm:rightinv}, an open graph has a gflow if and only if it has a DAG as right inverse. Notice that this DAG is nothing but the graphical description of the focused gflow function: the set of successors of a vertex $u$ is the image of $u$ by the focused gflow function.


As a corollary of Theorem \ref{thm:rightinv}, $(G,I,O)$ has no gflow if $|I|>|O|$. Indeed, for dimension reasons, if $|I|>|O|$ the matrix $\smat{A_G}{O^C}{I^C}$ has no right inverse. When $|I|=|O|$ the focused gflow is \emph{reversible} in the following sense: 

\begin{theorem} \label{thm:revers}
When $|I|=|O|$, $(G,I,O)$ has a gflow iff $(G,O,I)$ has a gflow.
\end{theorem}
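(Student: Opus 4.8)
The plan is to reduce the statement to linear algebra over $\mathbb{F}_2$ using Theorem \ref{thm:rightinv}, and then exploit the symmetry of the adjacency matrix of the undirected graph $G$. First I would observe that the hypothesis $|I|=|O|$ forces $|O^C| = |I^C|$, so the induced adjacency matrix $M := \smat{A_G}{O^C}{I^C}$ is \emph{square}. By Theorem \ref{thm:rightinv}, $(G,I,O)$ has a gflow iff there is a DAG $F$ on $V(G)$ with $\smat{A_G}{O^C}{I^C}\,\smat{A_F}{I^C}{O^C} = I$, i.e. iff $M$ admits a right inverse $R := \smat{A_F}{I^C}{O^C}$ that is realised by a DAG.

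Next I would identify the induced matrix of the reversed open graph $(G,O,I)$. Swapping the roles of inputs and outputs, its induced adjacency matrix is $\smat{A_G}{I^C}{O^C}$, and since $A_G$ is symmetric ($G$ is undirected) this equals $M^T$. So by Theorem \ref{thm:rightinv} again, $(G,O,I)$ has a gflow iff $M^T$ admits a DAG right inverse. The target is thus to show that $M$ has a DAG right inverse iff $M^T$ does.

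The core of the argument uses squareness twice. Since $M$ is square and has a right inverse $R$, it is invertible with $R = M^{-1}$ a genuine two-sided inverse; transposing $RM = MR = I$ gives $M^T R^T = R^T M^T = I$, so $R^T$ is the (unique) right inverse of $M^T$. It remains to check that $R^T$ is itself realised by a DAG. Writing $R = \smat{A_F}{I^C}{O^C}$, let $F'$ be the graph obtained from $F$ by reversing every edge, so that $A_{F'} = A_F^T$ and hence $\smat{A_{F'}}{O^C}{I^C} = R^T$. The reverse of a DAG is again a DAG, so $F'$ witnesses that $M^T = \smat{A_G}{I^C}{O^C}$ has a DAG right inverse, and Theorem \ref{thm:rightinv} then yields a gflow for $(G,O,I)$.

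Finally, the converse direction follows from the very same argument after exchanging the labels $I$ and $O$, since the reversed open graph of $(G,O,I)$ is $(G,I,O)$ and $(M^T)^T = M$. I expect the only delicate point to be the bookkeeping of the row/column index sets $O^C$ and $I^C$ under transposition and under the input/output swap, together with verifying that edge reversal correctly transports the DAG (hence extensiveness) property; once the transpose is matched to the reversed DAG, the equivalence is immediate from Theorem \ref{thm:rightinv} and the fact that over a field a square matrix with a one-sided inverse is invertible.
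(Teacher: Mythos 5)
Your proof is correct and takes essentially the same route as the paper's: both reduce the claim via Theorem \ref{thm:rightinv}, identify the induced adjacency matrix of $(G,O,I)$ as the transpose $M^{T}$ of that of $(G,I,O)$ using the symmetry of $A_G$, use squareness (from $|I|=|O|$) to upgrade the DAG right inverse to a two-sided inverse, transpose the resulting identity, and conclude because reversing the edges of a DAG (equivalently, transposing its adjacency matrix) yields a DAG. The only difference is cosmetic: you make explicit the bookkeeping of index sets and the reversed graph $F'$ with $A_{F'} = A_F^{T}$, which the paper leaves implicit.
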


\begin{proof} Notice that the induced adjacency matrix of $(G,O,I)$ is the transpose $\,^t \smat{A_G}{O^C}{I^C}$ of the one of $(G,I,O)$. Moreover, since $\smat{A_G}{O^C}{I^C}$ is squared, $\smat{A_F}{I^C}{O^C}$ is both right and left inverse of $\smat{A_G}{O^C}{I^C}$.  
Thus, $\smat{A_G}{I^C}{O^C}. ^t\smat{A_F}{I^C}{O^C} = \,^t(\smat{A_F}{I^C}{O^C}.\smat{A_G}{O^C}{I^C})= I$. As a consequence $\smat{A_G}{O^C}{I^C}$ has a right inverse which is a DAG since the transpose of a DAG is a DAG. 
\end{proof}


\section{Relaxing Uniform Determinism}\label{sec:relax}
Focused gflow guarantees uniformly stepwise strong determinism. 
We consider here two more general classes of MBQC evolutions: the \emph{equi-probabilistic} case where all the branches occur with the same probability, independent of the input state; and the \emph{constant probability} case where all the branches occur with a probability independent of the input state. We show that both equi-probabilitic and constant probabilistic evolutions are information preserving and admit a simple graphical characterisation by means of violating sets.

\begin{definition}
An MBQC $(G,I,O,\alpha, \mathtt x, \mathtt z)$ is:\\
--- \emph{equi-probabilistic} if for any input state $\ket \phi \in \mathbb C^{2^I}$ and any branch $s\in \{0,1\}^{O^C}$, $p_s= ||\chi_s\ket \phi|| = \frac 1{2^{|O^C|}}$.\\
--- \emph{constant-probabilistic} if for any branch $s\in \{0,1\}^{O^C}$ the probability $p_s=||\chi_s\ket \phi|| $ that the branch $s$ occurs does not depend on the input state $\ket \phi$. 
\end{definition}

Constant probabilistic (and hence equi-probabilistic) evolutions are \emph{information preserving} in the sense that if one knows the branch $s$ of the computation (i.e. the classical outcome) then he can recover the initial input state of the computation. Indeed, if an MBQC is constant probabilistic then the map $\ket \phi \mapsto  ||\chi_s\ket \phi||$ is constant, thus $\chi_s^\dagger \chi_s = p_s. I$. If $p_s=0$ then the branch never occurs, otherwise the branch $s$ is implementing an isometry. 

~

\noindent \emph{Remark:} Notice that the knowledge of the branch $s$, which is necessary the case in the MBQC model because of the corrective strategy, is essential to make an  equi-probabilistic evolution information preserving. Indeed, consider the quantum one-time pad example with $\forall s\in \{0,1\}^2$, $\chi_s = \sigma_s/2$ where $\sigma_s$ is a Pauli operator ($\sigma_{00}=I$,   $\sigma_{01}=X$,$\sigma_{10}=Y$,$\sigma_{11}=Z$). This evolution is equi-probabilistic but if the information of the branch is not taken into account, the corresponding super operator is $\rho \mapsto \sum_{s\in \{0,1\}^2} \sigma_s \rho\sigma_s^\dagger = I/2$ which is clearly not information preserving. 

~

We prove that uniform equi- and constant probabilities
 have simple graph characterisations by violating sets, where uniformity is defined similarly to the determinism case: $(G,I,O)$ guarantees uniformly constant (resp. equi-) probabilisty if  $\exists \mathtt x, \mathtt z$ s.t. $\forall \alpha$, $(G,I,O,\alpha, \mathtt x, \mathtt z)$ has a constant (resp. equi-) probabilistic evolution.
  
\begin{theorem} \label{uu}
An open graph $(G,I,O)$ guarantees uniform equiprobability iff
$$\forall W\subseteq O^C, Odd(W)\subseteq W\cup I \implies W =\emptyset$$
\end{theorem}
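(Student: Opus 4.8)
The plan is to compute the branch probability $p_s=\|\chi_s\ket\phi\|^2$ explicitly from $N$ and $P_s(\alpha)$, and then to read off the combinatorial condition by exploiting the freedom in the measurement angles $\alpha$. First I would note that, since equiprobability forces $p_s$ to be independent of $\ket\phi$, polarisation gives $\chi_s^\dagger\chi_s=p_s\,I$; thus the statement is purely about the common value of these operators being the same scalar for every branch $s$. Writing $\chi_s=P_s(\alpha)\,C_sN$ with $C_s=\prod_uX_{s\cdot\mathtt x(u)}Z_{s\cdot\mathtt z(u)}$ a (unitary) Pauli correction, a direct computation gives $P_s(\alpha)^\dagger P_s(\alpha)=\frac1{2^{|I^C|}}\ket{v_s}\bra{v_s}_{O^C}\otimes I_O$, where $\ket{v_s}$ is a product vector on the measured qubits all of whose amplitudes have modulus $1$. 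Since $C_s$ factorises across the cut $O^C\mid O$ and only permutes and signs these amplitudes, it never changes their moduli; I would argue from this that the corrective maps are irrelevant to \emph{whether} equiprobability can be achieved, so the quantifier ``$\exists\,\mathtt x,\mathtt z$'' is vacuous and the whole question reduces to a property of $N$.

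Carrying $C_s$ into a unit-modulus phase vector, I obtain $p_s=\frac1{2^{|I^C|}}\sum_{x,x'\in\{0,1\}^{O^C}}\omega^{(s)}_{x,x'}\,\langle\eta_x|\eta_{x'}\rangle$, where $\ket{\eta_x}=(\bra{x}_{O^C}\otimes I_O)N\ket\phi$ are the sub-normalised output states indexed by the computational outcomes on $O^C$, and $|\omega^{(s)}_{x,x'}|=1$ with $\alpha$-dependence $e^{i(\alpha_{x\cdot s}-\alpha_{x'\cdot s})}$. Equiprobability for \emph{all} $\alpha$ therefore means that this Gram-type quadratic form is constant in $\alpha$ (and in $\ket\phi$, and in $s$); the only $\alpha$-dependence sits in the phases $\omega^{(s)}_{x,x'}$ attached to pairs with $x\neq x'$, so the task becomes to decide which off-diagonal blocks $\langle\eta_x|\eta_{x'}\rangle$ survive.

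The heart of the argument is evaluating the operator $N^\dagger(\ket{x}\bra{x'}_{O^C}\otimes I_O)N$ straight from the definition of $N$. Pairing the two graph-state phases carried by the $V(G)$-assignments $z,z'$ that agree on $O$, carry $x,x'$ on $O^C$, and hence differ exactly on $W:=x\oplus x'\subseteq O^C$, a short calculation gives the clean identity $q(z)+q(z')\equiv|E(G_W)|+\sum_{v\in Odd(W)}z_v\pmod2$, so the block $W$ enters $p_s$ with the sign $(-1)^{\sum_{v\in Odd(W)}z_v}$. Now in the full expression for $p_s$ every bit $z_v$ with $v\in V(G)\setminus(W\cup I)$ is summed freely --- those on $O\setminus I$ through the trace $I_O$, and the measured ones on $O^C\setminus(W\cup I)$ through the outer sum over computational outcomes --- whereas the bits on $I$ are pinned by $\ket\phi$ and the bits on $W$ only carry the angle phase. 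Hence if $Odd(W)$ meets $V(G)\setminus(W\cup I)$, summing that one free bit produces a factor $1+(-1)=0$ and the entire block vanishes; only blocks with $Odd(W)\subseteq W\cup I$ can survive. This bookkeeping --- in particular keeping the measured input qubits $I\cap O^C$ correctly pinned, so that ``$Odd(W)\subseteq W\cup I$'' emerges rather than the weaker ``$Odd(W)\subseteq O^C\cup I$'' --- is the step I expect to be the main obstacle.

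Both directions then follow. If no nonempty $W\subseteq O^C$ satisfies $Odd(W)\subseteq W\cup I$, every off-diagonal block vanishes, so $p_s$ collapses to its diagonal part, which is manifestly independent of $\alpha$, of $s$ and of $\ket\phi$; this gives uniform equiprobability for \emph{any} corrective maps. Conversely, given a violating set $W$ the corresponding block does not vanish and appears multiplied by the genuinely $\alpha$-dependent phase $e^{i(\alpha_{x\cdot s}-\alpha_{x'\cdot s})}$; choosing $s$ with $supp(s)\cap W\neq\emptyset$ and varying the angles on $W$ makes $p_s$ non-constant, hence not equiprobable, irrespective of $\mathtt x,\mathtt z$. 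In this last step I would take care to isolate a single frequency --- for instance by shrinking $supp(s)\cap W$ to one vertex --- so that the surviving block cannot be cancelled by a different block at the same frequency.
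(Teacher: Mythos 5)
Your \emph{if} direction is sound, and it is essentially the paper's own computation in different clothes. The parity identity $q(z)+q(z\oplus 1_W)\equiv |E(G_W)|+\sum_{v\in Odd(W)}z_v \pmod 2$ is the algebraic form of the paper's sign $(-1)^a$ coming from the (anti)commutation of $X_{V_u}$ with $Z^a_{N_G(v)}$, and summing one free bit at $v\in Odd(W)\setminus(W\cup I)$ to get $1+(-1)=0$ is exactly the paper's cancellation over the two branches of the recursion $\ket{\phi_G}=\tfrac1{\sqrt2}\sum_{a}Z^a_{N_G(v)}\ket{\phi_{G\setminus v}}\otimes\ket a_v$. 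Your bookkeeping of pinned versus free bits (free on $V\setminus(W\cup I)$ via the trace on outputs and the outcome sum on $O^C\setminus(W\cup I)$; pinned on $I$ by the input amplitudes and on $W$ by the angle phases) is correct and yields precisely the condition $Odd(W)\subseteq W\cup I$, and your absorption of the Pauli corrections into a relabelling of the angles is the same move as the paper's footnote reducing everything to the $0$-branch.

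The \emph{only if} direction, however, has a genuine gap: you assert that ``given a violating set $W$ the corresponding block does not vanish,'' but the block's value is (up to a sign and a factor $2^{-|W|}$) the expectation $\bra{\phi}\sigma_W\ket{\phi}$ of an input Pauli $\sigma_W$ having $X$ on $W\cap I$, $Z$ on $Odd(W)\cap I\setminus W$, and $Y$ on their overlap, and this expectation can be zero --- e.g. $\bral 0|X\ket 0=0$, so for a computational-basis input every block with $W\cap I\neq\emptyset$ dies and $p$ really is constant in $\alpha$ for that input. Since equiprobability quantifies over all inputs, you are entitled to choose $\ket\phi$, and the missing step is precisely to take an eigenstate of $\sigma_W$; this is the content of the paper's construction $\ket{\phi_0}=\ket{+}_{W_0\cap I}\otimes\ket 0_{W_0^C\cap I}$, with angles $0$ and $\pi/2$ on $W_0\setminus Odd(W_0)$ and $W_0\cap Odd(W_0)$, after which the resource state is a $(-1)^{|E(W_0)|}$-eigenvector of a Pauli supported on $W_0$ and half the branches occur with probability exactly $0$ --- a cleaner refutation than non-constancy in $\alpha$. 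Your closing worry about distinct blocks colliding ``at the same frequency'' is, by contrast, unfounded, and the proposed fix does not parse: the frequency of a block is indexed by $u\in\{-1,0,1\}^{O^C}$ itself (i.e. by $W$ together with the sign pattern on $W$), distinct $u$ give distinct characters $\alpha\mapsto e^{i\alpha\cdot u}$ of the full angle torus, and these are linearly independent, so a single nonzero coefficient already forces non-constancy; whereas ``shrinking $supp(s)\cap W$ to one vertex'' confuses the branch label $s$ with the violating set $W$, and violating sets are in any case not closed under taking subsets.
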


A nonempty set $W\subseteq O^C$ such that $Odd(W)\subseteq W\cup I $ is called an internal set. Theorem \ref{uu} says that an open graph $(G,I,O)$ guarantees uniform equi-probability if and only if it has no internal set.  

\begin{proof}
(\emph{if}) First we assume that there is no internal set  and we show that every branch occurs with the same probability $1/{2^{|O^C|}}$, independently of the input state and the set of measurement angles. For a given open graph $(G,I,O)$, a given input state $\ket \phi$ and a given set of measurement angles $\{\alpha_v\}_{v\in O^C}$, we consider w.l.o.g. the $0$-branch, i.e. the branch where all outcomes are $0$ \footnote{The other branches are taken into account by considering a different set of measurement angles e.g. the branch where all outcomes are $1$ corresponds to the $0$-branch when the set of measurements is $\{\alpha_v +\pi\}_{v\in O^C}$.}. The probability of this branch is $p = || \prod_{v\in O^c} {\bral +_{\alpha_v}}\ket {\phi_{G}}||^2 = \frac1{{2^{|O^C|}}}||\sum_{x\in \{0,1\}^{O^C}} e^{i\alpha_x}\bral x \ket {\phi_G}||^2$ where $\alpha_x = \sum_{v\in O^C} \alpha_v.x_v$ and $\ket{\phi_G} = E_G\ket +_{I^C}\ket \phi_I$. As a consequence,\\ 
$\begin{array}{rcl} p &=& \frac1{{2^{|O^C|}}}\sum_{x,y\in \{0,1\}^{O^C}} e^{i(\alpha_y-\alpha_x)}\bral {\phi_G}\ket x \bral y \ket {\phi_G}\\
&=&\frac1{{2^{|O^C|}}} \sum_{u\in \{-1,0,1\}^{O^C}} e^{i\alpha_u}\sum_{x,y\in \{0,1\}^{O^C}~s.t.~x-y=u} \bral {\phi_G}\ket x \bral y \ket {\phi_G} \\
&=& \frac1{{2^{|O^C|}}} \sum_{u\in \{-1,0,1\}^{O^C}} e^{i\alpha_u}\sum_{x\in \{0,1\}^{V_u^C}} \bral {\phi_G}\ket x_{V_u^C} \ket {\frac {1+u}2}_{V_u} \bra x_{V_u^C} \bra {\frac {1-u}2}_{V_u}\ket {\phi_G} \\
&=&\frac1{{2^{|O^C|}}}\sum_{u\in \{-1,0,1\}^{O^C}} e^{i\alpha_u}\bra {\phi_G} \ket {\frac {1+u}2}_{V_u}\left(\sum_{x\in \{0,1\}^{V_u^C}} \ket x \bra x \right) \bra {\frac {1-u}2}_{V_u}\ket {\phi_G}\\
&=&\frac1{{2^{|O^C|}}}\sum_{u\in \{-1,0,1\}^{O^C}} e^{i\alpha_u}\bra {\phi_G} \ket {\frac {1+u}2}_{V_u} \bra {\frac {1-u}2}_{V_u}\ket {\phi_G}\\
&=&\frac1{{2^{|O^C|}}}\sum_{u\in \{-1,0,1\}^{O^C}} e^{i\alpha_u}p_u
\end{array}$\\
where $V_{u} =\{i\in O^C ~|~ u_i\neq 0\}$,  $\ket{\frac {1\pm u}2}_{V_u} =\bigotimes_{i\in V_u} \ket {\frac {1\pm u_i}2}_i$, and $p_u = \bra {\phi_G} \ket {\frac {1+u}2}_{V_u} \bra {\frac {1-u}2}_{V_u}\ket {\phi_G}$. 
Notice that for any $v\in I^C$, $\ket {\phi_G} = \frac1{\sqrt 2}\sum_{a\in \{0,1\}}Z^a_{N_G(v)}\ket {\phi_{G\setminus v}}\otimes \ket a_v$. 
Thus for any $u\in \{-1,0,1\}^{O^C}$ s.t. $V_u\neq \emptyset$, 
there exists $v\in  I^C\cap V_{u}^C \cap Odd(V_{u})$ (which is not empty by hypothesis) such that:\\
$\begin{array}{rcl}
p_u&=& \bra {\phi_G} \ket {\frac {1+u}2}_{V_u} \bra {\frac {1+u}2}_{V_u}X_{V_u} \ket {\phi_G}\\
&=&\frac 12 \sum_{a,b\in \{0,1\}} \bra {\phi_{G\setminus v}} \bra a_vZ^a_{N_G(v)}  \ket {\frac {1+u}2}_{V_u} \bra {\frac {1+u}2}_{V_u}X_{V_u}Z^b_{N_G(v)}\ket {\phi_{G\setminus v}}\ket b_v \\
&=&\frac 12 \sum_{a\in \{0,1\}} \bra {\phi_{G\setminus v}} Z^a_{N_G(v)}  \ket {\frac {1+u}2}_{V_u} \bra {\frac {1+u}2}_{V_u}X_{V_u}Z^a_{N_G(v)}\ket {\phi_{G\setminus v}} \\
&=&\frac 12 \sum_{a\in \{0,1\}} (-1)^a \bra {\phi_{G\setminus v}} Z^a_{N_G(v)}  \ket {\frac {1+u}2}_{V_u} \bra {\frac {1+u}2}_{V_u}Z^a_{N_G(v)}X_{V_u}\ket {\phi_{G\setminus v}} \\
&=&\frac 12 \sum_{a\in \{0,1\}}(-1)^a \bra {\phi_{G\setminus v}}   \ket {\frac {1+u}2}_{V_u} \bra {\frac {1+u}2}_{V_u}X_{V_u}\ket {\phi_{G\setminus v}} = 0
\end{array}$ \\
where the factor $(-1)^a$ comes  from the fact that $X_{V_u}$ and $Z^a_{N_G(v)}$ are commuting when $a=0$ and anticommuting when $a=1$ since $v\in Odd(V_{u})$. 
  As a consequence, it remains in $p$ only the case where $V_u=\emptyset$, so $p=  
   \frac1{{2^{|O^C|}}}\bral {\phi_G} \ket {\phi_G} = \frac1{{2^{|O^C|}}}$.
   
(\emph{only if}) Now we prove that the existence of a violating set  implies that there exists a particular input state  and a particular set of measurement angles such that some branches occur with probability $0$. Let $W_0 \subseteq O^C$ s.t. $Odd(W_0)\cap W_0^C \cap I^C = \emptyset$ and $P=\bigotimes_{v\in V} P_v$ be a Pauli operator defined as follows:\\ \centerline{$\forall v\in V, P_v=\begin{cases} X&\text{ if $v\in W_0$ and $v\notin  Odd(W_0)$}\\ Y&\text{ if $v\in W_0\cap Odd(W_0) $}\\ I& \text{ otherwise}\end{cases}$}
Let $\ket {\phi_0} = \ket {+}_{W_0\cap I}\otimes \ket 0_{W_0^C\cap I}$ be an input state. Notice that\\
$\begin{array}{rcl}
PE_G\ket+_{I^C}\ket {\phi_0} &=& (-1)^{|E(W_0)|}E_GX_{W_0}Z_{Odd(W_0)\cap W_0^C}\ket+_{I^C}\ket {\phi_0} \\
&=&(-1)^{|E(W_0)|}E_GX_{W_0}\ket+_{I^C\cup W_0}Z_{Odd(W_0)\cap W_0^C}\ket 0_{W_0^C\cap I} \\
&=& (-1)^{|E(W_0)|}E_G\ket+_{I^C}\ket {\phi_0},
\end{array}$ \\
where $E(W) = E\cap (W\times W)$ is the set of the internal edges of $W$.  Thus $E_G\ket+_{I^C}\ket \phi_0$ is an the eigenvector of $P$ associated with the eigenvalue $(-1)^{|E(W_0)|}$, implying that if each qubit $v\in W_0$ is individually measured according to the observable $P_v$ producing the classical outcome $s_v\in \{0,1\}$,  then $\sum_{v\in W_0} s_v =|E(W_0)|~[2]$. 
As a consequence, for the input $\ket {\phi_0}$ and any set of measurements $\{\alpha_v\}_{v\in O^C}$ s.t. $\alpha_v = 0$ if $v\in W_0\cap Odd(W_0)^C $ and $\alpha_v = \pi/2$ if $v\in W_0\cap Odd(W_0)$,  all the  branches ${\bf s}$ s.t. $\sum_{v\in W_0} {\bf s}_v = 1+|E(W_0)|~ [2]$ occur with probability $0$.
\end{proof}

\begin{theorem}\label{vv}
An open graph $(G,I,O)$ guarantees uniform constant probability iff 
$$\forall W\subseteq O^C, Odd(W)\subseteq W\cup I \implies L(W) \cap I = \emptyset$$
where $L(W) := \mbox{Odd}(W) \cup W$ denotes a local set. 
\end{theorem}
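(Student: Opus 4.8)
The plan is to follow the two-step structure of the proof of Theorem \ref{uu}, reusing its analysis of the $0$-branch probability and relaxing its conclusion from ``all surviving terms vanish'' to ``all surviving terms are input-independent''. Recall from that proof that, for a given input $\ket{\phi_I}$ and measurement angles $\{\alpha_v\}$, the probability of the $0$-branch is
$$p = \frac{1}{2^{|O^C|}}\sum_{u\in\{-1,0,1\}^{O^C}} e^{i\alpha_u}\,p_u,\qquad p_u = \bra{\phi_G}\ket{\tfrac{1+u}{2}}_{V_u}\bra{\tfrac{1-u}{2}}_{V_u}\ket{\phi_G},$$
with $V_u=\{v\in O^C : u_v\neq 0\}$, and that $p_u=0$ whenever $V_u$ is non-empty but not internal (the vertex $v\in I^C\cap V_u^C\cap Odd(V_u)$ used there exists precisely when $V_u$ is not internal). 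Since the characters $\alpha\mapsto e^{i\alpha_u}$ are linearly independent, the $0$-branch probability is input-independent for every $\alpha$ if and only if each surviving coefficient $p_u$ (those with $V_u$ internal or empty) is input-independent; together with the footnote reduction of an arbitrary branch to the $0$-branch with shifted angles, this shows that uniform constant probability is equivalent to the input-independence of every such $p_u$. Writing $b=\tfrac{1+u}{2}\in\{0,1\}^{V_u}$ and $\bar b$ for its bitwise complement, $p_u$ is exactly the anti-diagonal matrix element $(\rho_{V_u})_{\bar b,b}$ of the reduced state $\rho_{V_u}=\mathrm{Tr}_{V_u^C}\ket{\phi_G}\bra{\phi_G}$, so the whole theorem reduces to a single claim: for an internal set $V$, every element $(\rho_V)_{\bar b,b}$ is input-independent if and only if $L(V)\cap I=\emptyset$.

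For the (\emph{if}) direction I would compute $\rho_V$ explicitly. Assuming $L(V)\cap I=\emptyset$, i.e. $V\cap I=\emptyset$ and (using internality) $Odd(V)\subseteq V$, I first peel off the input qubits: expanding $\ket{\phi_I}=\sum_i\phi_i\ket i_I$ and commuting the controlled-$Z$ gates touching $I$ gives $\ket{\phi_G}=\sum_i \phi_i\,Z_{R_i}\ket{G_0}_{I^C}\ket i_I$ with $R_i=\bigoplus_{j\,:\,i_j=1}N_G(j)\subseteq I^C$ and $\ket{G_0}$ the input-free graph state on $I^C$. Because $V\cap I=\emptyset$ the input register is traced out against an orthonormal basis, yielding $\rho_V=\sum_i|\phi_i|^2\,Z_{R_i\cap V}\,\sigma_V\,Z_{R_i\cap V}$ with $\sigma_V=\mathrm{Tr}_{I^C\setminus V}\ket{G_0}\bra{G_0}$ manifestly input-independent. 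A short computation then gives $(\rho_V)_{\bar b,b}=\big(\sum_i|\phi_i|^2(-1)^{|R_i\cap V|}\big)\,\bra{\bar b}\sigma_V\ket b$, and the condition $Odd(V)\subseteq V$ forces $|N_G(j)\cap V|$ to be even for every input $j\notin V$, hence $|R_i\cap V|$ even and $(-1)^{|R_i\cap V|}=1$ for all $i$. Thus $(\rho_V)_{\bar b,b}=\bra{\bar b}\sigma_V\ket b$ is input-independent, so every surviving $p_u$ is, and $p$ is constant.

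For the (\emph{only if}) direction I would exhibit, for an internal $W_0$ with $L(W_0)\cap I\neq\emptyset$, an explicit measurement whose statistics depend on the input. I reuse the very operator $P=\prod_{v\in W_0}P_v$ from the proof of Theorem \ref{uu} ($P_v=X$ on $W_0\setminus Odd(W_0)$, $P_v=Y$ on $W_0\cap Odd(W_0)$), which is realized by single-qubit $X$/$Y$ measurements on $W_0\subseteq O^C$. The key identity to establish is
$$\bra{\phi_G}P\ket{\phi_G} \;=\; \pm\,\bra{\phi_I}\,Z_{Odd(W_0)\setminus W_0}\,X_{W_0\cap I}\,\ket{\phi_I},$$
obtained by writing $\prod_{v\in W_0}K_v=\pm\,P\,Z_{Odd(W_0)\setminus W_0}$ for the graph-state stabilizers $K_v=X_vZ_{N_G(v)}$, using $\big(\prod_{v\in W_0}K_v\big)\ket{\phi_G}=E_G\ket{+}_{I^C}X_{W_0\cap I}\ket{\phi_I}$, and using internality to note $Odd(W_0)\setminus W_0\subseteq I$ so the residual $Z$'s act only on the input. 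The Pauli $Z_{Odd(W_0)\setminus W_0}X_{W_0\cap I}$ has support $L(W_0)\cap I$, so it is the identity precisely when $L(W_0)\cap I=\emptyset$; when $L(W_0)\cap I\neq\emptyset$ it is a non-trivial Pauli on the inputs, whose expectation therefore varies with $\ket{\phi_I}$. Hence $\bra{\phi_G}P\ket{\phi_G}$, a signed sum of the branch probabilities, depends on the input, so some branch probability does and the evolution is not constant-probabilistic.

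The step I expect to be the main obstacle is the (\emph{only if}) direction in the case where the violating internal set meets the inputs, $W_0\cap I\neq\emptyset$: there the input qubits are themselves among the measured qubits, so the stabilizer product $\prod_{v\in W_0}K_v$ no longer fixes $\ket{\phi_G}$ but bit-flips the input. The identity above is what makes this case tractable, since it packages both the ``$Odd(W_0)$ meets $I$'' and the ``$W_0$ meets $I$'' phenomena into a single input-Pauli expectation whose non-triviality is governed exactly by $L(W_0)\cap I$. A secondary point requiring care is the justification that input-independence of the full $0$-branch probability for all angles forces input-independence of each Fourier coefficient $p_u$; this uses only the linear independence of the characters $e^{i\alpha_u}$, but must be stated cleanly, as it is what lets the global probabilistic condition be checked term by term.
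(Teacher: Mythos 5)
Your proposal is correct, and it reaches both directions by a genuinely different route than the paper. For the (\emph{if}) direction the paper manipulates the stabilizers to turn the anti-diagonal overlap $p_u$ into a diagonal one (up to the sign $(-1)^{|E(V_u)|+|V_u\cap Odd(V_u)|}$) and then computes its exact value $2^{-|V_u|}$ by inductively peeling the vertices of $V_u\subseteq I^C$; you never evaluate $p_u$ at all, instead decomposing the reduced state $\rho_{V_u}$ over computational-basis inputs as $\sum_i|\phi_i|^2 Z_{R_i\cap V_u}\,\sigma_{V_u}\,Z_{R_i\cap V_u}$ and observing that the anti-diagonal entries acquire the factor $(-1)^{|R_i\cap V_u|}=1$ because $Odd(V_u)\subseteq V_u$ forces every input to have even degree into $V_u$. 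This gives input-independence directly, which is all the theorem needs, at the cost of the quantitative value $p_u=\pm 2^{-|V_u|}$ that the paper gets for free. For the (\emph{only if}) direction both arguments measure the same Pauli $P$, but where the paper exhibits explicit states $\ket{\phi_a}$ making $E_G\ket{+}_{I^C}\ket{\phi_a}$ an eigenvector of $P$ with eigenvalue $(-1)^{a+|E(W_0)|}$ and runs a parity-of-outcomes argument, you prove the identity $\bra{\phi_G}P\ket{\phi_G}=\pm\bra{\phi_I}Z_{Odd(W_0)\setminus W_0}X_{W_0\cap I}\ket{\phi_I}$ and note that the residual input Pauli has support exactly $L(W_0)\cap I$ (this uses internality, $Odd(W_0)\setminus W_0\subseteq I$). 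Your packaging is in fact more robust: the paper's displayed states $\ket{\phi_a}=\ket{+}_{W_0\cap I}\otimes\ket{0}_{W_0^C\cap I\setminus u_0}\otimes\ket{a}_{u_0}$ are only well-formed when $u_0\in Odd(W_0)\setminus W_0$, and the case $u_0\in W_0\cap I$ silently requires switching to $X$-eigenstates on $u_0$ --- precisely the case you flagged, which your identity absorbs uniformly since $\pm 1$-eigenstates of the residual Pauli give expectations $\pm 1$ regardless of where $L(W_0)$ meets $I$. You also make explicit, via linear independence of the characters $\alpha\mapsto e^{i\alpha_u}$, why constant probability for all angles is equivalent to input-independence of each surviving coefficient $p_u$, a step the paper uses only in its easy direction and leaves implicit; like the paper, you work with the uncorrected postselected probabilities and rely on the standard reduction of Pauli corrections to angle shifts under the uniformity quantifier, so on that point your proof matches the paper's level of rigor exactly.
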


A nonempty set $W\subseteq O^C$ such that $Odd(W)\subseteq W\cup I $ and $L(W)\cap I \neq \emptyset$ is called a strongly internal set. Theorem \ref{vv} says that an open graph $(G,I,O)$ guarantees uniform constant probability if and only if it has no strongly internal set, or equivalently if and only if all internal sets are `far enough' from the inputs.

\begin{proof} 
(\emph{if}) First we assume that there is no strongly internal set and we show that every branch occurs with a probability independent of the input. Using the notations of the proof of theorem \ref{uu}, it only  remains to prove that $p_u$ is independent of the input for any $u\neq 0$ such that $I^C\cap V_u^C\cap Odd(V_u)=\emptyset$ and $L(V_u)\cap I=\emptyset$. Note that $Odd(V_u)\subseteq V_u\subseteq I^C$ so \\
$\begin{array}{rcl}
p_u&=& \bra {\phi_G} \ket {\frac {1+u}2}_{V_u} \bra {\frac {1+u}2}_{V_u}X_{V_u} \ket {\phi_G}\\
&=& (-1)^{|E(V_u)|}\bra {\phi_G} \ket {\frac {1+u}2}_{V_u} \bra {\frac {1+u}2}_{V_u}E_GZ_{Odd(V_u)} X_{V_u}\ket +_{I^C}\ket {\phi}_I\\
&=& (-1)^{|E(V_u)|}\bra {\phi_G} \ket {\frac {1+u}2}_{V_u} \bra {\frac {1+u}2}_{V_u}E_GZ_{Odd(V_u)} \ket +_{I^C}\ket {\phi}_I\\
&=& (-1)^{|E(V_u)|+|V_u\cap Odd(V_u)|}\bra {\phi_G} \ket {\frac {1+u}2}_{V_u} \bra {\frac {1+u}2}_{V_u}\ket {\phi_G}\\
\end{array}$ \\
Moreover,   for any $v\in V_u$, since $v\in I^C$, $\bra {\phi_G} \ket {\frac {1+u}2}_{V_u} \bra {\frac {1+u}2}_{V_u}\ket {\phi_G}$\\$\begin{array}{rcl}&=&\frac1{2}\sum_{a,b\in \{0,1\}} \bra {\phi_{G\setminus v}}\bra a_v Z^a_{N_G(v)}\ket {\frac {1+u}2}_{V_u} \bra {\frac {1+u}2}_{V_u}Z^b_{N_G(v)} \ket b_v \ket {\phi_{G\setminus v}}\\
&=&\frac1{2} \bra {\phi_{G\setminus v}} Z^{\frac {1+u_v}2}_{N_G(v)}\ket {\frac {1+u}2}_{V_u\setminus v} \bra {\frac {1+u}2}_{V_u\setminus v}Z^{\frac {1+u_v}2}_{N_G(v)}  \ket {\phi_{G\setminus v}}\\
&=& \frac1{2}\bra {\phi_{G\setminus v}} \ket {\frac {1+u}2}_{V_u\setminus v} \bra {\frac {1+u}2}_{V_u\setminus v}  \ket {\phi_{G\setminus v}}\\
\end{array}$\\
So, by induction, $\bra {\phi_G} \ket {\frac {1+u}2}_{V_u} \bra {\frac {1+u}2}_{V_u}\ket {\phi_G} =\frac1{2^{|V_u|}} \bra {\phi_{G\setminus V_u}}\ket {\phi_{G\setminus V_u}} = \frac1{2^{|V_u|}}$. 
This shows that $p_u$ does not depend on the input state. \\
(\emph{only if}) Now we prove that the existence of a strongly internal set implies that there exists a particular set of measurement angles such that some branches occur with probability zero for some input state and with nonzero probability for other inputs. Let $W_0 \subseteq O^C$ s.t. $Odd(W_0)\cap W_0^C \cap I^C = \emptyset$,  $u_0\in L(W_0)\cap I$, and  $P=\bigotimes_{v\in V} P_v$ be a Pauli operator defined like in the proof of theorem \ref{uu}. 
We consider the following input states:
$\ket {\phi_a} = \ket {+}_{W_0\cap I}\otimes \ket 0_{W_0^C\cap I\setminus u_0}\otimes \ket a_u$ for $a\in \{0,1\}$. Notice that $PE_G\ket+_{I^C}\ket {\phi_a}_I =(-1)^{a+|E(W_0)|}  E_G\ket+_{I^C}\ket {\phi_a}_I$. 
Let $\alpha_v = \pi/2$ if $v\in W_0\cap Odd(W_0)$ and $\alpha_v = 0$ otherwise. We consider a branch ${\bf s}$ of measurement which occurs with a nonzero probability if the input state  is $\ket {\phi_0}$. Notice that this branch satisfies $\sum_{v\in W_0} {\bf s}_v = (-1)^{|E(W_0)|}$. As a consequence, if the input state is $\ket {\phi_1}$, this branch ${\bf s}$ occurs with probability $0$.
\end{proof}

\section{Uniform Equiprobability versus Gflow Existence}

Since the existence of a gflow implies strongly uniform determinism it also implies uniform equiprobability. In general uniform equiprobability does not imply gflow: 

\begin{lemma}
When $|I|\neq |O|$, there exists an open graph that satisfies uniform equiprobability but that has no gflow. 
\end{lemma}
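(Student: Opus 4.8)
The plan is to show first that the desired open graph must have $|I|<|O|$, and then to exhibit one explicitly. I would dispose of the case $|I|>|O|$ immediately: if $|I^C|<|O^C|$ then the $\mathbb{F}_2$-linear map $W\mapsto Odd(W)\cap I^C$ on $2^{O^C}$ has a nontrivial kernel, so there is a nonempty $W\subseteq O^C$ with $Odd(W)\cap I^C=\emptyset$, i.e. $Odd(W)\subseteq I\subseteq W\cup I$. This $W$ is an internal set, so by Theorem \ref{uu} uniform equiprobability fails. Hence any witness to the lemma must satisfy $|I|<|O|$.

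Next I would isolate what such a witness has to exploit. By Theorem \ref{thm:rightinv}, having a gflow is equivalent to $\smat{A_G}{O^C}{I^C}$ admitting a DAG as right inverse. On the other hand, the kernel argument above shows that the \emph{absence} of an internal set forces $W\mapsto Odd(W)\cap I^C$ to be injective, hence $\smat{A_G}{O^C}{I^C}$ to have full row rank and thus to possess a (linear) right inverse. So the only way to be equiprobabilistic yet have no gflow is that a right inverse exists but none of its realisations is acyclic. Since the edges of such a realisation run from $O^C$ to $I^C$, any directed cycle lives inside the internal vertices $O^C\cap I^C$; the task is therefore to build a graph whose internal vertices are ``frustrated'', forcing a cycle in every right inverse, while keeping every subset of $O^C$ non-internal.

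Concretely I would take the $6$-vertex open graph with $I=\emptyset$, measured vertices $\{1,2,3,4\}$ and outputs $O=\{o_1,o_2\}$, with edges $\{1,3\}$ and $\{2,4\}$ among the measured vertices, together with $o_1$ adjacent to $1,2,3$ and $o_2$ adjacent to $2,3,4$; note $|I|=0<2=|O|$. To prove uniform equiprobability I would apply Theorem \ref{uu}: a set $W\subseteq\{1,2,3,4\}$ can be internal only if no output lies in $Odd(W)$, which happens exactly for $W\in\{\{2,3\},\{1,2,4\},\{1,3,4\}\}$, and a direct check shows each of these has a measured vertex outside $W$ in its odd neighbourhood (namely $1$, $3$, and $2$ respectively). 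Hence none is internal, there is no internal set, and the open graph guarantees uniform equiprobability.

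Finally, to show there is no gflow I would use Theorem \ref{thm:focused}. The two output columns of $\smat{A_G}{O^C}{I^C}$ are $(1,1,1,0)^{t}$ and $(0,1,1,1)^{t}$, and one checks that none of the four unit vectors $e_i$ lies in their span. Consequently, for each $i$, a focused gflow set $g(i)$ (a solution of $Odd(g(i))\cap O^C=\{i\}$) cannot be assembled from outputs alone and must contain at least one of $1,2,3,4$. Thus in the associated directed graph every measured vertex has an out-neighbour inside $\{1,2,3,4\}$, and a finite digraph in which every vertex has out-degree at least $1$ contains a directed cycle; so this graph is never a DAG and no focused gflow exists. The main obstacle, and the step I expect to require the most care, is precisely the design of the edge set: equiprobability pushes every measured subset to ``leak'' to a vertex outside itself, whereas the absence of a gflow needs the measured vertices to be collectively inescapable, and reconciling these is exactly what rules out two- and three-internal-vertex gadgets (where the full measured set or a measured pair is always internal) and forces the four-vertex frustrated construction above.
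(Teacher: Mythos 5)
Your proposal is correct and takes essentially the same route as the paper: the paper also simply exhibits an explicit six-vertex open graph with $|I|\neq|O|$ (two linked triangles, $I=\{v_1\}$, $O=\{v_5,v_6\}$) and checks the two characterisations — no internal set (Theorem \ref{uu}) for equiprobability, and no gflow because no subset of the outputs has a singleton odd neighbourhood in $O^C$, which is exactly your span-of-output-columns observation, merely stated more tersely. Your variant with $I=\emptyset$ checks out (the only candidate sets $\{2,3\},\{1,2,4\},\{1,3,4\}$ each have an external odd neighbour among $1,2,3,4$, and the span $\{(1,1,1,0)^{t},(0,1,1,1)^{t},(1,0,0,1)^{t}\}$ contains no unit vector, so every $g(i)$ meets the measured vertices and the out-degree/cycle argument kills any DAG right inverse as required by Theorem \ref{thm:rightinv}), and your preliminary remark that $|I|>|O|$ is impossible is also correct, just not needed for the existence statement.
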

\begin{proof}
Consider the graph depicted  in Figure \ref{equigf}.
It is easy to see that it has no gflow, as no subset of the outputs has a single vertex as its odd neighorhood. On the other hand, all the  subsets of $O^C$ have a nonempty external odd neighborhood in $I^C$.
\end{proof}

\begin{figure}[!h] 
\[\begin{tikzpicture}[cat,scale=0.8]
\node[h vertex] (in) at (0,0){};
\node[black vertex] (v1) at (0,0){};
\node[vertex] (v2) at (1,-0.8){};
\node[black vertex] (v3) at (1,0.8){};
\node[black vertex] (v4) at (2.5,-0.8){};
\node[vertex] (v5) at (2.5,0.8){};
\node[black vertex] (v6) at (3.5,0){};
\draw[thick] (v1) -- (v2) -- (v3) -- (v1);
\draw[thick] (v4) -- (v5) -- (v6) -- (v4);
\draw[thick] (v2) -- (v4);
\draw[thick] (v3) -- (v5);
\node(v1) at (-0.5,0){$v_1$};
\node(v2) at (0.7,1.1){$v_2$};
\node(v5) at (2.8,1.1){$v_5$};
\node(v3) at (4,0){$v_3$};
\node(v4) at (2.8,-1.1){$v_4$};
\node(v6) at (0.7,-1.1){$v_6$};
\end{tikzpicture}\]
\caption{Open graph $(G,I,O)$ with $I = \{v_1\}$ and $O=\{v_5,v_6\}$ satisfying the uniform equiprobability condition but having not gflow.}
\label{equigf}
\end{figure}
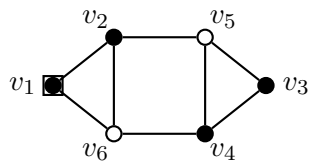

However, in the particular case where $|I|=|O|$, the existence of a gflow implies uniform equiprobability.

\begin{theorem} \label{thmunifgflow}
When $|I|=|O|$, $(G,I,O)$ guarantees uniform equiprobability iff it has a gflow.
\end{theorem}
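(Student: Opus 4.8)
The plan is to prove the two implications separately, treating the forward one as essentially free and concentrating all the work on the converse. If $(G,I,O)$ has a gflow then by Theorem~\ref{thm:focused} it guarantees uniform stepwise strong determinism, so every branch is $\chi_s=\frac1{\sqrt{2^{|O^C|}}}U$ with $U$ an isometry and hence $p_s=\|\chi_s\ket{\phi}\|^2=\frac1{2^{|O^C|}}$ independently of the input $\ket{\phi}$; this is exactly uniform equiprobability, the implication already noted in the text. For the converse I would immediately translate the hypothesis through Theorem~\ref{uu}: ``guarantees uniform equiprobability'' becomes the purely combinatorial statement that $(G,I,O)$ has no internal set, i.e. the only $W\subseteq O^C$ with $Odd(W)\subseteq W\cup I$ is $W=\emptyset$. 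The assumption $|I|=|O|$ enters only through $|I^C|=|O^C|$, which makes the induced adjacency matrix $M:=\smat{A_G}{O^C}{I^C}$ square.

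The next step extracts invertibility of $M$ from the absence of internal sets. The transpose $\,^tM$ represents the map $W\mapsto Odd(W)\cap I^C$ on $2^{O^C}$, so any nonzero $W$ in its kernel would satisfy $Odd(W)\cap I^C=\emptyset\subseteq W$ and thus be an internal set; the hypothesis therefore forces $\,^tM$ to be injective, and being square, $M$ is invertible. In the notation of Theorem~\ref{thm:rightinv} the unique inverse defines a candidate flow function $g:O^C\to 2^{I^C}$ by $1_{g(u)}:=M^{-1}1_{\{u\}}$, which by construction satisfies $Odd(g(u))\cap O^C=\{u\}$ for every $u\in O^C$. By Theorem~\ref{thm:focused} it now suffices to show that $g$ is extensive, since then $g$ is a focused gflow and a gflow exists; equivalently, by Theorem~\ref{thm:rightinv}, one must check that the directed graph carrying an arc $u\to v$ whenever $v\in g(u)$ is a DAG.

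I expect this acyclicity to be the main obstacle, and it is precisely here that the full no-internal-set hypothesis, rather than mere invertibility of $M$, is needed: invertibility by itself permits spurious short cycles in the support of $M^{-1}$, so some genuinely combinatorial argument is unavoidable. My plan is to argue by contradiction. From a cycle in the dependency graph I would take the vertex set $S\subseteq O^C$ of a cycle (which is strongly connected) and set $K:=\bigoplus_{u\in S}g(u)\subseteq I^C$; using $Odd(A\oplus B)=Odd(A)\oplus Odd(B)$ together with $Odd(g(u))\cap O^C=\{u\}$ gives $Odd(K)\cap O^C=\bigoplus_{u\in S}\{u\}=S$. The aim is then to combine this identity with the strong connectivity of $S$ to manufacture a nonempty internal set, contradicting Theorem~\ref{uu}. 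I anticipate the delicate point to be the parity bookkeeping required to turn $S$ and $K$ into a set $W$ with $Odd(W)\subseteq W\cup I$; should a direct construction prove awkward, an alternative is to invoke the reversibility of Theorem~\ref{thm:revers}, performing the same extraction simultaneously on $(G,O,I)$ and combining the two acyclic structures. Once extensivity is secured, $g$ is a focused gflow and the equivalence is complete.
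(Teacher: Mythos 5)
Your overall architecture matches the paper's (translate via Theorem~\ref{uu} to the absence of internal sets, use $|I|=|O|$ to make the induced adjacency matrix square, extract injectivity from the no-internal-set condition, read the inverse as a candidate focused-gflow function, and prove acyclicity by a shortest-cycle contradiction), but the step you defer as ``parity bookkeeping'' is precisely where your construction, as stated, fails. Your set $K:=\bigoplus_{u\in S}g(u)$ lies in $I^C$, and (taking $S$ to be a \emph{shortest}, hence chordless, cycle --- note that chordlessness, not the strong connectivity you cite, is what makes each vertex of $S$ receive exactly one arc from $S$, giving $S\subseteq K$; an arbitrary simple cycle with chords can break this) you obtain $Odd(K)\cap O^C=S\subseteq K$, i.e.\ $Odd(K)\subseteq K\cup O$ with $K\subseteq I^C$ nonempty. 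That is an internal set for the \emph{reversed} open graph $(G,O,I)$, not for $(G,I,O)$, so it does not contradict your hypothesis: Theorem~\ref{uu} applied to $(G,I,O)$ forbids nonempty $W\subseteq O^C$ with $Odd(W)\subseteq W\cup I$, and nothing forbids your $K$. You cannot patch this by symmetry either: the implication ``no internal sets for $(G,I,O)$ and $|I|=|O|$ imply no internal sets for $(G,O,I)$'' is, modulo Theorems~\ref{uu} and~\ref{thm:revers}, exactly the theorem being proved, so assuming it is circular; and Theorem~\ref{thm:revers} cannot be ``invoked'' at this stage because it transports gflows, and you do not yet hold a gflow of either orientation.

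The repair is a transpose away, and it is what the paper actually does. In your notation, since $M$ is invertible so is ${}^tM$, with $({}^tM)^{-1}={}^t(M^{-1})$; set $W:={}^t(M^{-1})\,1_S=\{u\in O^C : |g(u)\cap S|=1 \bmod 2\}$. Then ${}^tM\,1_W=1_S$, i.e.\ $Odd(W)\cap I^C=S$, and chordlessness of the shortest cycle gives $S\subseteq W$ (each $u\in S$ has exactly one out-arc into $S$), so $W\subseteq O^C$ is nonempty and $Odd(W)\subseteq S\cup I\subseteq W\cup I$ --- an internal set for $(G,I,O)$, the contradiction you need. The paper reaches the same endpoint by working from the start with the reversed graph: it shows the induced matrix $\smat{A_{G}}{I^C}{O^C}$ of $(G,O,I)$ is injective by your same kernel computation, defines the digraph $H$ of its inverse (with vertices of $O$ given no successors), kills the shortest cycle $S$ via $W:=Odd_H(S)\cap O^C$ exactly as above, concludes by Theorem~\ref{thm:rightinv} that $(G,O,I)$ has a gflow, and \emph{only then} applies Theorem~\ref{thm:revers} to carry it back to $(G,I,O)$. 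So your plan is sound up to the choice of which side of the inverse to accumulate: the violating set must be extracted in $O^C$ (columns/transpose side), not in $I^C$ as your $K$ does.
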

\begin{proof}
We only  have to prove that uniform equiprobability implies the existence of gflow (the other direction is obvious). We prove the existence of a gflow for $(G,O,I)$ which, according to theorem \ref{thm:revers},  implies the existence of a gflow for $(G,I,O)$. Since $(G,I,O)$ is uniformly equiprobable, the matrix $\smat{A_{G}}{I^C}{O^C}$ is injective, so reversible. Indeed, for any $W\subseteq O^C$, $\smat{A_{G}}{I^C}{O^C}.1_W= \emptyset \iff 1_{Odd(W)\cap I^C} = 0 \implies  Odd(W)\subseteq I\subseteq W\cup I $ so $W=\emptyset$. The matrix $\left({\smat{A_{G}}{I^C}{O^C}}\right)^{-1}$ is the induced matrix of a directed open graph $(H,O,I)$, where $H$ is chosen s.t. vertices in $O$ have no successor. In the following we show that $H$ is a DAG. By contradiction, let $S\subseteq V(H)$ be the shortest cycle in $H$. Notice that  $S\subseteq O^C$ since vertices in $O$ have no successor. 
$\smat{A_{G}}{I^C}{O^C}.(\smat{A_{G}}{I^C}{O^C})^{-1}.1_S = 1_S \iff \smat{A_{G}}{I^C}{O^C}.1_{Odd_H(S)\cap O^C} =1_S \iff Odd_G(Odd_H(S)\cap O^C)\cap I^C = S$. Let $W := Odd_H(S)\cap O^C$. Since $S$ is the shortest cycle, $S\subseteq Odd_H(S)$. Moreover $S\subseteq O^C$ so $S\subseteq W$. Thus $Odd_G(W)\subseteq W\cup I^C$ which implies $W=\emptyset$, so $S=\emptyset$. Thus $H$ is a DAG. 
%
%
%
%
%
%
\end{proof}


Notice that thanks to Theorem \ref{thmunifgflow}  the stepwise condition in the characterisation of gflow can be removed, improving Theorem \ref{thm:gflow}:

\begin{corollary}
When $|I|=|O|$, if  $(G,I,O)$ guarantees uniform strong determinism  iff it has a gflow.
\end{corollary}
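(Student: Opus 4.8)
The plan is to observe that this corollary sits squarely between Theorem~\ref{thm:gflow} and Theorem~\ref{thmunifgflow}, so that both implications reduce to results already established; no new combinatorial argument is required. First I would dispatch the direction that a gflow implies uniform strong determinism. By Theorem~\ref{thm:gflow}, the existence of a gflow guarantees uniform \emph{stepwise} strong determinism; since the stepwise notion demands that \emph{every} partial computation be strongly deterministic, it in particular makes the full computation strongly deterministic. This reproduces the ``if'' half of Theorem~\ref{thm:gflow} with its stepwise refinement simply discarded, and it uses no hypothesis relating $|I|$ and $|O|$.

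The converse is where the assumption $|I|=|O|$ enters, and the key observation is that uniform strong determinism implies uniform equiprobability. Indeed, suppose $(G,I,O)$ guarantees uniform strong determinism, witnessed by corrective maps $\mathtt x,\mathtt z$ that work for every $\alpha$. Fixing any $\alpha$, every branch then satisfies $\chi_s = \frac{1}{\sqrt{2^{|O^C|}}}U$ for a common map $U$, and by the lemma asserting that a strongly deterministic MBQC implements an isometry we have $U^\dagger U = I$. Hence, for every input $\ket\phi$ and every branch $s$,
$p_s = \|\chi_s\ket\phi\|^2 = \frac{1}{2^{|O^C|}}\|U\ket\phi\|^2 = \frac{1}{2^{|O^C|}}$,
which is exactly the defining condition of equiprobability. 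Since the same $\mathtt x,\mathtt z$ serve for all $\alpha$, this equiprobability is uniform. Invoking Theorem~\ref{thmunifgflow}, which is precisely the place where $|I|=|O|$ is needed, the uniform equiprobability of $(G,I,O)$ yields a gflow, completing the equivalence.

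I do not expect a genuine obstacle here: all of the linear-algebraic and combinatorial work (the reversibility of the induced adjacency matrix and the DAG argument) has already been absorbed into Theorem~\ref{thmunifgflow}. The only point that deserves a moment's care is the implication ``strong determinism $\Rightarrow$ equiprobability,'' and this is immediate once one notes that the common map $U$ is an isometry, so that each branch probability collapses to $\tfrac{1}{2^{|O^C|}}$ independently of the input. I would therefore present the proof as an explicit reduction rather than re-deriving anything, stressing that the real content of the corollary is that the stepwise qualifier of Theorem~\ref{thm:gflow} becomes superfluous as soon as $|I|=|O|$.
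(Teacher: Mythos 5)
Your proof is correct and takes exactly the route of the paper's own (one-line) argument: the forward direction is Theorem~\ref{thm:gflow} with the stepwise qualifier discarded, and the converse is the observation that uniform strong determinism implies uniform equiprobability, after which Theorem~\ref{thmunifgflow} supplies the gflow when $|I|=|O|$. You have simply made explicit the isometry computation ($U^\dagger U = I$ forcing $p_s = \tfrac{1}{2^{|O^C|}}$ for every input) that the paper leaves implicit.
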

\begin{proof}
Uniform strong determinism implies equiprobability which ensures the existence of gflow when $|I|=|O|$.
\end{proof}

\section{Choosing Inputs and Outputs}\label{sec:choosing}

The fact that the characterisation of uniform probability is by internal subsets allows us to have a better view of the following general problem: given a graph, which vertices can be chosen as outputs and inputs for measurement based quantum  information processing.

\begin{definition}
Given a graph $G$, for any $A\subseteq V(G)$,  let $\mathcal{E}_A$ be the collection of  internal sets outside $A$:
$
\mathcal{E}_A := \{S\subseteq V, S\neq \emptyset   \wedge \mbox{Odd}(S) \cap S^C \cap A^C= \emptyset \}
$
\end{definition}
A transversal of a collection $C$ of sets is a set that intersects all the elements of $C$. The set of all transversals of $\mathcal{E}_A$ is $T(\mathcal{E}_A):= \{S^{\prime} \subseteq V, {}^{\forall} S \in \mathcal{E}_A \hspace{0.2cm}S^{\prime} \cap S \not= \emptyset \}$.

\begin{lemma}
If an  open graph $(G,I,O)$ guarantees uniform equiprobability then $O \in T(\mathcal E_{\emptyset})$.
\end{lemma}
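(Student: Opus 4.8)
The plan is to argue by contradiction, reducing the statement directly to the internal-set characterisation of uniform equiprobability established in Theorem \ref{uu}. Before anything else I would unwind the two definitions involved, since the whole content of the lemma is a translation between the transversal formulation and the internal-set formulation. Setting $A = \emptyset$ gives $A^C = V$, so the defining condition of $\mathcal{E}_{\emptyset}$ collapses to $Odd(S) \cap S^C = \emptyset$; that is, $\mathcal{E}_{\emptyset}$ is exactly the collection of nonempty sets $S \subseteq V$ with $Odd(S) \subseteq S$. Correspondingly, the claim $O \in T(\mathcal{E}_{\emptyset})$ unfolds to the assertion that $O \cap S \neq \emptyset$ for every such $S$.

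Next I would suppose the conclusion fails, so that some $S \in \mathcal{E}_{\emptyset}$ satisfies $O \cap S = \emptyset$. The key observation is that $O \cap S = \emptyset$ is equivalent to $S \subseteq O^C$, which is precisely the domain restriction appearing in the notion of an internal set. Combined with the fact that $S$ is nonempty and $Odd(S) \subseteq S \subseteq S \cup I$, this exhibits $S$ as a nonempty internal set of $(G,I,O)$. By Theorem \ref{uu}, uniform equiprobability forbids any nonempty internal set, so we reach a contradiction and conclude $O \in T(\mathcal{E}_{\emptyset})$.

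There is no serious obstacle here: the argument is a purely set-theoretic matching of conditions, and the only care needed is in lining up the clauses — noticing that disjointness from $O$ is the same as containment in $O^C$, and that $Odd(S) \subseteq S$ trivially implies $Odd(S) \subseteq S \cup I$. In effect, $\mathcal{E}_{\emptyset}$ consists of the internal sets that happen to avoid $O$, and uniform equiprobability rules these out, which is exactly what the transversal condition asserts.
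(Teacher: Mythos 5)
Your proof is correct and follows essentially the same route as the paper: argue by contradiction, note that a set $W\in\mathcal{E}_{\emptyset}$ disjoint from $O$ satisfies $W\subseteq O^C$ and $Odd(W)\subseteq W\subseteq W\cup I$, and invoke the internal-set characterisation of Theorem \ref{uu} to force $W=\emptyset$. If anything, your write-up is cleaner than the paper's, which contains a small typo ($Odd(W)\subseteq W\cup I^C$ where $W\cup I$ is meant).
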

\begin{proof}
By contradiction if $W\in \mathcal E_{\emptyset}$ and $W\cap O = \emptyset$, then $Odd(W)\cap W^C =\emptyset$, so $Odd(W)\subseteq W\cup I^C$ which implies $W=\emptyset$. It contradicts the fact that $W\in \mathcal E_{\emptyset}$. 
\end{proof}

%

\begin{theorem}
An open graph $(G,I,O)$ guarantees uniform equiprobability if and only if $O \in T(\mathcal E_{I})$.
\end{theorem}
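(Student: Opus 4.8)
The plan is to derive this statement directly from Theorem \ref{uu} by unwinding the definition of the transversal condition and matching it against the characterisation by internal sets. First I would rewrite the membership condition for $\mathcal E_I$: a nonempty $S\subseteq V$ lies in $\mathcal E_I$ exactly when $Odd(S)\cap S^C\cap I^C=\emptyset$, which is merely the complementary form of $Odd(S)\subseteq S\cup I$ (since $Odd(S)\subseteq S\cup I$ iff $Odd(S)\cap(S\cup I)^C=\emptyset$ iff $Odd(S)\cap S^C\cap I^C=\emptyset$). So the sets in $\mathcal E_I$ are precisely the nonempty sets satisfying the defining inequality of an internal set, except that here $S$ ranges over all of $V$ rather than being confined to $O^C$.

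Next I would analyse the negation of $O\in T(\mathcal E_I)$. By definition, $O$ is a transversal of $\mathcal E_I$ iff $O\cap S\neq\emptyset$ for every $S\in\mathcal E_I$, so $O\notin T(\mathcal E_I)$ iff there is some $S\in\mathcal E_I$ with $O\cap S=\emptyset$, i.e. with $S\subseteq O^C$. Combining this with the previous step, $O\notin T(\mathcal E_I)$ holds iff there exists a nonempty $S\subseteq O^C$ with $Odd(S)\cap S^C\cap I^C=\emptyset$, equivalently $Odd(S)\subseteq S\cup I$. This is exactly the existence of an internal set in the sense of Theorem \ref{uu}.

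Finally I would invoke Theorem \ref{uu}: $(G,I,O)$ guarantees uniform equiprobability iff it has no internal set, i.e. iff there is no nonempty $W\subseteq O^C$ with $Odd(W)\subseteq W\cup I$. By the equivalence established in the previous paragraph this is precisely the condition $O\in T(\mathcal E_I)$, which finishes the proof.

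The only point requiring care is the change of quantifier range: $\mathcal E_I$ is defined over all subsets $S\subseteq V$, whereas the internal sets of Theorem \ref{uu} are constrained to $S\subseteq O^C$. This mismatch is harmless because the transversal test against $O$ automatically restricts attention to exactly those members of $\mathcal E_I$ that avoid $O$, namely those contained in $O^C$; the members of $\mathcal E_I$ that meet $O$ never threaten the transversal property and so play no role. I therefore expect no genuine obstacle, since the whole argument is a direct reformulation of Theorem \ref{uu} via De Morgan and contraposition.
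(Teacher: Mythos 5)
Your proposal is correct and follows essentially the same route as the paper: the paper's proof is exactly this chain of equivalences, rewriting $O\in T(\mathcal E_I)$ as ``no nonempty $W\subseteq O^C$ satisfies $Odd(W)\cap W^C\cap I^C=\emptyset$'' and then matching this against the internal-set characterisation of Theorem \ref{uu}. Your explicit remark about the quantifier range (members of $\mathcal E_I$ range over all of $V$, but only those avoiding $O$ matter for the transversal test) is left implicit in the paper's one-line proof, and making it explicit is a harmless, indeed welcome, clarification.
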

\begin{proof}
$O\in T(\mathcal E_I)\iff \forall W\in \mathcal E_I, W\cap O\neq \emptyset \iff \forall W\subseteq O^C, W\notin \mathcal E_I \iff \forall W\subseteq O^C, \neg (Odd(W)\cap W^C \cap I^C \wedge W\neq \emptyset) \iff \forall W\subseteq O^C, (Odd(W)\subseteq W\cup I \Rightarrow W=\emptyset)$.
\end{proof}

\begin{theorem}
Given a graph $G$ and two subsets of vertices $I$ and $O$ with $|I|=|O|$, the open graph $(G,I,O)$  guarantees equiprobability iff  $I \in T(\mathcal E_{\emptyset})$ and $O \in T(\mathcal E_{I})$.
\end{theorem}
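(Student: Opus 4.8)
The plan is to reduce the statement to the immediately preceding characterisation — namely that $(G,I,O)$ guarantees uniform equiprobability iff $O\in T(\mathcal E_I)$ — combined with the reversibility of gflow that holds when $|I|=|O|$. The one observation that makes everything fit together is that $\mathcal E_{\emptyset}$ depends only on the graph $G$ and not on the designation of inputs and outputs: it is exactly the collection of nonempty sets $S$ with $Odd(S)\subseteq S$. Hence the same collection governs both $(G,I,O)$ and its reverse $(G,O,I)$, which is what allows an input condition to be extracted by a symmetric argument.

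For the backward implication I would point out that the condition $I\in T(\mathcal E_{\emptyset})$ is actually redundant: by the preceding theorem, $O\in T(\mathcal E_I)$ is by itself equivalent to $(G,I,O)$ guaranteeing uniform equiprobability, so assuming both hypotheses immediately yields equiprobability. The entire content of the theorem therefore lies in the forward implication. For that direction, assume $(G,I,O)$ guarantees equiprobability with $|I|=|O|$. The condition $O\in T(\mathcal E_I)$ is just the preceding theorem applied to $(G,I,O)$. To obtain $I\in T(\mathcal E_{\emptyset})$ I would pass to the reversed open graph $(G,O,I)$: by Theorem \ref{thmunifgflow} equiprobability of $(G,I,O)$ gives a gflow of $(G,I,O)$; by Theorem \ref{thm:revers}, since $|I|=|O|$, this is equivalent to a gflow of $(G,O,I)$; and applying Theorem \ref{thmunifgflow} once more, now with the roles of $I$ and $O$ exchanged (legitimate because $|O|=|I|$), shows that $(G,O,I)$ also guarantees uniform equiprobability. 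Finally, applying the preceding lemma to $(G,O,I)$, whose output vertex set is $I$, yields $I\in T(\mathcal E_{\emptyset})$, as required.

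The step I expect to demand the most care is this symmetric application of the earlier results to $(G,O,I)$. One must verify that every link of the chain equiprobability $\Rightarrow$ gflow $\Rightarrow$ reversed gflow $\Rightarrow$ reversed equiprobability genuinely invokes the hypothesis $|I|=|O|$, and that the preceding lemma, read with input $O$ and output $I$, really produces a transversal condition on $I$ against the \emph{same} input-independent family $\mathcal E_{\emptyset}$ rather than against some $\mathcal E_{O}$. Once this bookkeeping is checked, the remainder is a direct quotation of the preceding theorem and lemma, and no new computation is needed.
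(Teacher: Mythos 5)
Your proposal is correct and follows essentially the same route as the paper's own proof: equiprobability of $(G,I,O)$ yields a gflow by Theorem \ref{thmunifgflow}, reversibility (Theorem \ref{thm:revers}) transfers it to $(G,O,I)$, whose resulting uniform equiprobability together with the preceding lemma gives $I\in T(\mathcal E_{\emptyset})$, while $O\in T(\mathcal E_I)$ and the backward implication are immediate from the preceding theorem. Your extra bookkeeping --- the input-independence of $\mathcal E_{\emptyset}$ and the step the paper compresses (its ``$(G,I,O)$ guarantees uniform equiprobability'' should read $(G,O,I)$) --- simply makes explicit what the paper's terse argument intends.
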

\begin{proof}
When  $|I|=|O|$, if $(G,I,O)$ guarantees equiprobability then $(G,I,O)$ has a gflow (Theorem \ref{thmunifgflow}) and thus $(G,O,I)$ has a gflow (Theorem \ref{thm:revers}) as well. As a consequence $(G,I,O)$ guarantees uniform equiprobability so  $I \in T(\mathcal E_{\emptyset})$. \end{proof}
{This observation allows a characterisation of  the possible deterministic computations for small graphs.} The main question is,
given a  graph $G$, how to find $I\subseteq V(G)$ and $O\subseteq V(G)$ with $|I|=|O|$ such that $(G,I,O)$ has gflow.

Furthermore it is straightforward to see that :
\begin{lemma}
If an open graph $(G,I,O)$  guarantees uniform equi-probability then $(G,I',O')$ with $I' \subseteq I$ and $O\subseteq O'$ also  guarantees uniform equi-probability.
\end{lemma}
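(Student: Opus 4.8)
The plan is to reduce everything to the combinatorial characterisation of uniform equiprobability provided by Theorem \ref{uu}: an open graph guarantees uniform equiprobability if and only if it has no internal set, i.e. no nonempty $W\subseteq O^C$ with $Odd(W)\subseteq W\cup I$. The hypothesis tells us that $(G,I,O)$ has no internal set, and it suffices to show that passing to $(G,I',O')$ with $I'\subseteq I$ and $O\subseteq O'$ cannot create a new one. Thus the whole lemma becomes a monotonicity property of the violating-set condition, and by Theorem \ref{uu} it is enough to argue purely combinatorially, forgetting about angles, states and corrective maps entirely.

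I would argue by contradiction. Suppose $(G,I',O')$ possesses an internal set, i.e. a nonempty $W\subseteq (O')^C$ satisfying $Odd(W)\subseteq W\cup I'$. The goal is to verify that this same $W$ is already an internal set for the original open graph $(G,I,O)$, which contradicts the hypothesis. The two containments to exploit are $O\subseteq O'$ and $I'\subseteq I$, and the key observation is that each pushes the internal-set condition in the favourable direction: enlarging the outputs shrinks the pool of candidate sets, while shrinking the inputs makes the defining inclusion harder to satisfy. Concretely, from $O\subseteq O'$ we get $(O')^C\subseteq O^C$, so $W\subseteq (O')^C$ immediately yields $W\subseteq O^C$, making $W$ a legitimate candidate set for $(G,I,O)$; and from $I'\subseteq I$ we get $W\cup I'\subseteq W\cup I$, so $Odd(W)\subseteq W\cup I'\subseteq W\cup I$. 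Hence $W$ is a nonempty subset of $O^C$ with $Odd(W)\subseteq W\cup I$, an internal set of $(G,I,O)$, contradicting Theorem \ref{uu}. Therefore $(G,I',O')$ has no internal set and, again by Theorem \ref{uu}, guarantees uniform equiprobability.

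There is no genuine obstacle here, which is precisely why the statement is flagged as straightforward: the only thing one must get right is the direction of the two inclusions — inputs must shrink and outputs must grow, exactly the moves that can only delete internal sets rather than introduce them. The reader should be warned, though, that the opposite moves (enlarging inputs or shrinking outputs) need not preserve the property, so the asymmetry in the hypotheses $I'\subseteq I$ and $O\subseteq O'$ is essential and should not be glossed over.
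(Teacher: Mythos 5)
Your proof is correct and is precisely the argument the paper has in mind: the paper states this lemma without proof, flagging it as ``straightforward,'' and the intended justification is exactly your reduction via Theorem~\ref{uu}, checking that $O\subseteq O'$ gives $W\subseteq (O')^C\subseteq O^C$ and $I'\subseteq I$ gives $Odd(W)\subseteq W\cup I'\subseteq W\cup I$, so any internal set of $(G,I',O')$ is already one of $(G,I,O)$. Your closing remark on the essential asymmetry of the inclusions is also accurate and consistent with the paper's subsequent comment that the interesting problem is minimising $|O|$ and maximising $|I|$.
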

Notice that  gflow and constant probability classes are  also stable by adding new outputs or removing inputs. Thus the interesting problem when choosing inputs and outputs consists of minimizing $|O|$ and  maximizing $|I|$. 

Thus one can take minimal elements in $T(\mathcal E_{\emptyset})$ as inputs  $I$ and then look for minimal elements in $T(\mathcal E_I)$.
If they have the same size then we can conclude that they are a proper input/output pair for deterministic computation. This allows one to characterise the possible deterministic computations for small graphs (as it is not polynomial to compute the big transversal sets). For instance in the case of the $2 \times 3$ grid, the test shows that the minimal number of outputs is 2 and that there are only 3 solutions up to symmetry (see Figure \ref{inout}).
\begin{figure}[!h] 
\[\begin{tikzpicture}[cat,scale=0.8]
\node[h vertex] (in) at (-0.5,0.8){};
\node[black vertex] (v0) at (-0.5,0.8){};
\node[h vertex] (in) at (-0.5,-0.8){};
\node[black vertex] (v1) at (-0.5,-0.8){};
\node[black vertex] (v2) at (1,-0.8){};
\node[black vertex] (v3) at (1,0.8){};
\node[vertex] (v4) at (2.5,-0.8){};
\node[vertex] (v5) at (2.5,0.8){};
\draw[thick] (v0) -- (v1) -- (v2) -- (v4) -- (v5) -- (v3) -- (v0);
\draw[thick] (v3) -- (v2);
\end{tikzpicture}~~~~~~~~~~~~~~~
\begin{tikzpicture}[cat,scale=0.8]
\node[h vertex] (in) at (2.5,0.8){};
\node[black vertex] (v0) at (-0.5,0.8){};
\node[h vertex] (in) at (1,0.8){};
\node[vertex] (v1) at (-0.5,-0.8){};
\node[vertex] (v2) at (1,-0.8){};
\node[black vertex] (v3) at (1,0.8){};
\node[black vertex] (v4) at (2.5,-0.8){};
\node[black vertex] (v5) at (2.5,0.8){};
\draw[thick] (v0) -- (v1) -- (v2) -- (v4) -- (v5) -- (v3) -- (v0);
\draw[thick] (v3) -- (v2);
\end{tikzpicture}~~~~~~~~~~~~~~~
\begin{tikzpicture}[cat,scale=0.8]
\node[h vertex] (in) at (1,0.8){};
\node[vertex] (v0) at (-0.5,0.8){};
\node[h vertex] (in) at (-0.5,-0.8){};
\node[black vertex] (v1) at (-0.5,-0.8){};
\node[vertex] (v2) at (1,-0.8){};
\node[black vertex] (v3) at (1,0.8){};
\node[black vertex] (v4) at (2.5,-0.8){};
\node[black vertex] (v5) at (2.5,0.8){};
\draw[thick] (v0) -- (v1) -- (v2) -- (v4) -- (v5) -- (v3) -- (v0);
\draw[thick] (v3) -- (v2);
\end{tikzpicture}\]
\caption{Uniform deterministic choice of inputs for the $2 \times 3$ grid -- input (resp. output) vertices are represented by squared (resp. white) vertices.}
\label{inout}
\end{figure}
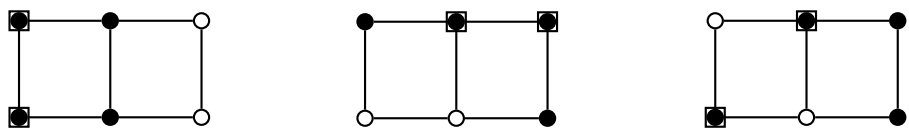

\section{Uniform Constant Probability}
The constant probability case is at the same time the most general case where information is not lost during the measurement and the less understood case.  In this last section, we investigate some properties of such graph states. We show a decomposition theorem into a gflow part and an `even' part, we analyze a transformation, and  we characterise open graphs with constant probability in the special case of one input and one output.
We also prove reversibility in the considered cases. 
\begin{lemma}
An open graph $(G,I,O)$ guarantees uniformly constant probability if and only if it guarantees uniformly constant probability after an $IO-$extension (after which all input/output vertices are of degree 1).
\end{lemma}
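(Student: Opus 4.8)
The plan is to invoke Theorem \ref{vv} to replace both occurrences of ``guarantees uniformly constant probability'' by the combinatorial statement ``has no strongly internal set'', and then to establish a correspondence between the strongly internal sets of $(G,I,O)$ and those of its extension. Write the $IO$-extension as $(G',I',O')$, obtained by attaching to each input $i\in I$ a fresh pendant $a_i$ (with the single edge $\{i,a_i\}$) which becomes the new input, and to each output $o\in O$ a fresh pendant $b_o$ which becomes the new output; thus $I'=\{a_i:i\in I\}$, $O'=\{b_o:o\in O\}$, every new input/output has degree $1$, and $O'^C=V(G)\cup I'$. The goal then reduces to: $(G,I,O)$ has a strongly internal set iff $(G',I',O')$ does.

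The computational heart is a single formula for odd neighbourhoods in $G'$. Any $W'\subseteq O'^C$ splits uniquely as $W'=W\sqcup J'$ with $W=W'\cap V(G)$ and $J'=\{a_i:i\in J\}$ for some $J\subseteq I$. Using that each pendant $a_i$ (resp. $b_o$) has the single neighbour $i$ (resp. $o$), together with $Odd(A\oplus B)=Odd(A)\oplus Odd(B)$, I would compute
\[
Odd_{G'}(W')=\big(Odd_G(W)\oplus J\big)\ \cup\ \{a_i:i\in I\cap W\}\ \cup\ \{b_o:o\in O\cap W\}.
\]
Since the output pendants $b_o$ lie in $O'$ and hence outside $W'\cup I'$, internality of $W'$ in $G'$ forces $O\cap W=\emptyset$ (so $W\subseteq O^C$), the input-pendant term is harmless as it lies in $I'$, and the only remaining requirement is $Odd_G(W)\oplus J\subseteq W$. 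Likewise $L_{G'}(W')\cap I'=\{a_i:i\in J\cup(I\cap W)\}$, where $L_{G'}(W')=Odd_{G'}(W')\cup W'$, so the strongly-internal condition in $G'$ reads $J\cup(I\cap W)\neq\emptyset$. Hence $W'=W\sqcup J'$ is strongly internal in $G'$ exactly when $W\subseteq O^C$, $Odd_G(W)\oplus J\subseteq W$ and $J\cup(I\cap W)\neq\emptyset$ (and then $W\neq\emptyset$ automatically, since $W=\emptyset$ would force $J=\emptyset$ and thus $W'=\emptyset$).

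For the backward implication I would start from such a $W'$ and show $W$ is strongly internal in $G$. From $Odd_G(W)\oplus J\subseteq W$ one gets $Odd_G(W)\subseteq W\cup J\subseteq W\cup I$, which is the internal condition, and $W\neq\emptyset$. For the witness $L(W)\cap I\neq\emptyset$: if $I\cap W\neq\emptyset$ we are done; if instead $J\neq\emptyset$, pick $j\in J$, and observe that either $j\in W$ (so $j\in W\cap I$) or $j\notin W$, in which case $Odd_G(W)\oplus J\subseteq W$ forces $j\in Odd_G(W)$, giving $j\in Odd_G(W)\cap I\subseteq L(W)\cap I$. For the forward implication, given a strongly internal $W$ in $G$ I would set $J:=Odd_G(W)\setminus W$, which lies in $I$ because $Odd_G(W)\subseteq W\cup I$; then $Odd_G(W)\oplus J=Odd_G(W)\cap W\subseteq W$ and $J\cup(I\cap W)\supseteq(Odd_G(W)\cup W)\cap I=L(W)\cap I\neq\emptyset$, so $W':=W\cup\{a_i:i\in J\}$ is strongly internal in $G'$.

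The routine part is the odd-neighbourhood bookkeeping; the only genuinely delicate point I expect is the backward witness step, where one must manufacture an element of $L(W)\cap I$ from the data of $W'$, which is exactly why the case split on $j\in J$ (using $Odd_G(W)\oplus J\subseteq W$ to force $j\in Odd_G(W)$ when $j\notin W$) is needed. A minor robustness check is that a vertex in $I\cap O$ is simply given both an input and an output pendant, which leaves the displayed formula and the whole argument unchanged.
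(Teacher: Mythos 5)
Your proof is correct and follows essentially the same route as the paper: both reduce, via Theorem \ref{vv}, to the absence of strongly internal (violating) sets and then transfer such sets between $(G,I,O)$ and the pendant extension. Your all-at-once version with the explicit formula for $Odd_{G'}(W')$ and the choice $J=Odd_G(W)\setminus W$ is in fact more careful than the paper's one-extension-at-a-time sketch, whose blanket claim that $S\cup\{i'\}$ is always violating silently needs exactly your refinement of adding the pendant $a_i$ only when $i\in L(S)$.
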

\begin{proof}
Let  $(G,I,O)$ be a an open graph and let $(G',I',O)$ be the graph obtained by changing an input $i$ to a noninput and adding a new input vertex $i'$ of degree one connected to $i$.
 $(G,I,O)$ guarantees uniformly constant probability iff it has no violating subsets {\it i.e.} a subset of nonoutput vertices containing inputs and with empty external odd neighborhood  ($Odd(S)\cap S^C=\emptyset$), or a subset of nonoutput vertices whose external odd neighborhood is a proper subset of the inputs.
For the input extension: for any violating set $S$ in $G$ ($Odd(S)\cap S^C$ is a proper subset of the inputs),  $S'=S\cup \{i\}'$ is a violating set for $G'$.  Similarly, for any violating set $S'$ in $G'$, either it does not contain $i'$ and then it is already a violating set in $G$, or $S'\setminus{i'}$ is a violating set in $G$.
For the output extension the violating sets are the same, as in the former graph the output cannot be in a violating set, and in the extended graph  the old output cannot be in a violating set as otherwise the odd neighborhood of the set will contain an output.
\end{proof}
Notice that  this extension also preserves equiprobability and gflow. 

Using the extension we can define graphs for which if there is a violating set then there is a violating set outside the inputs.

\begin{corollary}
Let $(G',I',O')$ be an open graph obtained by two input/output extensions of a given $(G,I,O)$. $(G,I',O')$  satisfies uniform constant probability iff $\nexists W \subseteq  O'^C \cap I'^C\, s.t. \, W\neq \emptyset$,  $Odd(W) \cap W^C \cap I'^C = \emptyset$,  and $Odd(W) \cap W^C \neq \emptyset$ (violating subsets are subsets of $I'^C\cap O'^C$ whose external odd neighborhood is a proper subset of the inputs.)
\end{corollary}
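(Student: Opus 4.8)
The plan is to read the statement through Theorem~\ref{vv} and then exploit the geometry created by the two extensions. By Theorem~\ref{vv}, the twice-extended open graph $(G',I',O')$ guarantees uniform constant probability if and only if it has no strongly internal set, i.e. no nonempty $W\subseteq O'^C$ with $Odd(W)\subseteq W\cup I'$ and $L(W)\cap I'\neq\emptyset$. The sets excluded in the statement --- the nonempty $W\subseteq O'^C\cap I'^C$ with $\emptyset\neq Odd(W)\cap W^C\subseteq I'$ --- are exactly the strongly internal sets that avoid the inputs and have a nonempty external odd neighbourhood. Since these form a subclass of the strongly internal sets, the implication ``no strongly internal set $\Rightarrow$ no excluded set'' is immediate, and the whole content of the corollary is the converse: as soon as $(G',I',O')$ has \emph{some} strongly internal set, it has one of this restricted form.

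To prove the converse I would use the structure produced by the two extensions. Each input $i$ of $(G,I,O)$ is replaced by a chain $i_2-i_1-i$, where $i_2\in I'$ has degree $1$ with unique neighbour $i_1$, the buffer $i_1$ has degree $2$ with neighbours $i_2$ and $i$, and $i$ keeps its original neighbourhood $N_G(i)$ together with the new edge to $i_1$; outputs are treated dually and, being never contained in a strongly internal set, require no modification. Because $i_2$ and $i_1$ have degree $1$ and $2$, toggling their membership in a set changes the odd neighbourhood only inside the single chain $\{i_2,i_1,i\}$ (using $Odd(A\oplus B)=Odd(A)\oplus Odd(B)$), and distinct input chains are vertex-disjoint. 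This is what makes a purely local, chain-by-chain normalisation possible, and it is already the point where one extension would not suffice: without the buffer, removing an input from $W$ forces a toggle of the former input $i$, whose original edges would propagate the change into the rest of the graph.

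Given a strongly internal $W$, I would normalise it independently on each input chain by deleting $i_2$ from $W$ and then choosing the membership of $i_1$ so as to restore $Odd(W^\ast)\subseteq W^\ast\cup I'$ locally. Writing $c=1$ if $i\in W$ and $c=0$ otherwise, and $d=|N_G(i)\cap W|\bmod 2$ (both untouched by the toggles), the correct choice is to place $i_1$ in the new set exactly when $c=1$ or $d=1$. A short parity check then shows the constraints at $i_1$ and at $i$ both hold: $i_1\in Odd(W^\ast)$ forces $c=1$, hence $i_1\in W^\ast$; and when $c=0$ one has $d+([i_1\in W^\ast])$ even, so $i\notin Odd(W^\ast)$. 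Every vertex outside the chains keeps its membership and its odd-neighbourhood parity, hence its constraint, so the resulting $W^\ast$ lies in $O'^C\cap I'^C$ and has $Odd(W^\ast)\cap (W^\ast)^C\subseteq I'$.

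The delicate point --- and the reason two extensions are genuinely needed --- is to guarantee that $W^\ast$ is still a \emph{violating} set, i.e. that its external odd neighbourhood stays nonempty; removing every input could a priori turn a strongly internal set into a mere internal set, which is an equiprobability obstruction rather than a constant-probability one. Since $W^\ast$ avoids the inputs, a new input $i_2$ lies in $Odd(W^\ast)\cap(W^\ast)^C$ precisely when $i_1\in W^\ast$, i.e. when $c=1$ or $d=1$ on that chain, and I would show this occurs on at least one chain by feeding the original strongly-internal constraints through \emph{both} buffer vertices. If $W$ met the inputs through some $i_2\in W$, the constraint at $i_1$ gives $c=1$ or $c=0$, and in the latter case the constraint at $i$ forces $d=1$; if $W$ met the inputs through some $i_1\in W$ (so $i_2\in Odd(W)$), the constraint at $i$ again forces $c=1$ or $d=1$. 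Either way that chain yields $i_1\in W^\ast$ and hence a nonempty external odd neighbourhood, so $W^\ast$ is a genuine restricted violating set. This two-level parity bookkeeping across $i_1$ and $i$ is the heart of the argument, and it is exactly the extra buffer supplied by the second extension that makes it go through.
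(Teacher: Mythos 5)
Your proposal is correct and takes essentially the same route as the paper: both arguments exploit the chain $i_2 - i_1 - i$ created by the double extension to push any strongly internal set off the inputs by a purely chain-local adjustment of the buffer vertex, and your closed-form rule ($i_1\in W^\ast$ iff $c=1$ or $d=1$) is exactly the outcome of the paper's stepwise two-case induction (buffer in the set: delete the input; buffer not in the set, so the old input is forced in: swap the input for the buffer). Your explicit parity check that the normalised set retains a nonempty external odd neighbourhood --- so that it remains a constant-probability obstruction rather than degrading to a mere equiprobability one --- is a welcome detail that the paper leaves implicit in its notion of ``violating set''.
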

\begin{proof}
We replace each input by  a new input of degree one and a vertex of degree 2 to form a new open graph $(G',I',O')$.
Let $S$ be a violating set that contains a new input $i$ in $G'$, and let $v$ be the vertex of degree 2  neighboring $i$.  $S$ has no odd neighbor outside itself and the inputs so either $S$ contains $v$, in which case $S\setminus \{i\}$ is also a violating set or $S$ does not contain $v$ but contains the old input, thus  $S\setminus \{i\} \cup \{v\}$ is a violating set.
Thus by induction over the number of input vertices, if there is a violating set then there is a violating set outside the inputs.
\end{proof}

Equivalently one can define violating sets to be always containing inputs.

\begin{lemma} \label{lemdec}
If open graph $(G,I,O)$ with $|I|=|O|$ guarantees uniform constant probability then there exists a subgraph  $G'=(V',E')$ of $G$ such that 
$(G',I,O)$ has a  gflow and $V\setminus V'$ has no odd neighborhood.
\end{lemma}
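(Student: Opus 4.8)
The plan is to recast the statement as a question about the induced adjacency matrix over $\mathbb{F}_2$ and to extract the even part directly from its kernel. Write $M := \smat{A_G}{O^C}{I^C}$, which is square because $|I|=|O|$, and recall from the proof of Theorem~\ref{thmunifgflow} that, when $|I|=|O|$, an open graph $(H,I,O)$ has a gflow iff its induced adjacency matrix is invertible (uniform equiprobability is injectivity of the matrix, which for a square matrix is invertibility, and by Theorem~\ref{thmunifgflow} this is gflow). If $M$ is already invertible we are done with $V'=V$ and $V\setminus V'=\emptyset$, so assume $M$ is singular.

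The first and most important step is to identify the kernel of $M$ with genuine even sets. Consider any $W\subseteq O^C$ with $Odd(W)\cap I^C=\emptyset$, i.e.\ a kernel element of the map $W\mapsto Odd(W)\cap I^C$. Then $Odd(W)\subseteq I\subseteq W\cup I$, so $W$ is an internal set; hence by the constant-probability characterisation (Theorem~\ref{vv}) we have $L(W)\cap I=\emptyset$, which forces both $W\cap I=\emptyset$ and $Odd(W)\cap I=\emptyset$. Combined with $Odd(W)\subseteq I$ this gives $Odd(W)=\emptyset$. Thus every kernel element is an \emph{even} set ($Odd(W)=\emptyset$) lying in $V\setminus(I\cup O)$, and conversely each such even set lies in the kernel. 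This is exactly the pool from which the even part $V\setminus V'$ must be drawn: removing such a set deletes the same row- and column-indices, since they sit in the shared block $O^C\cap I^C$, and leaves $I$ and $O$ untouched.

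It then remains to choose one even set $S\subseteq V\setminus(I\cup O)$ so that deleting its rows and columns from $M$ yields an invertible submatrix, equivalently so that $(G[V\setminus S],I,O)$ has a gflow; setting $V'=V\setminus S$ and $G'=G[V']$ finishes the proof because $Odd(S)=\emptyset$. I would carry this out by induction on the nullity of $M$: pick an inclusion-minimal nonzero kernel element $W_0$, remove it, and argue that $(G[V\setminus W_0],I,O)$ still guarantees uniform constant probability (so that the kernel characterisation of the previous paragraph reapplies) while its induced matrix has strictly smaller nullity. Since the sets removed at successive stages are disjoint and each is even, their union is again even, and the process terminates precisely when the matrix becomes invertible.

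The main obstacle is this last inductive step: showing that removing an even kernel element strictly decreases the nullity of the induced matrix and creates no new strongly internal set. The delicacy is that evenness is a property relative to the whole graph $G$ (the $Odd$ is taken over all of $V$), whereas the induced matrix of $G[V\setminus W_0]$ only records adjacencies inside $V\setminus W_0$; one must therefore control the interaction between the removed and the retained vertices, in particular checking that a set which is even after deletion, together with $W_0$, is still even in $G$, so that the two deletions compose and the final $V\setminus V'$ genuinely has empty odd neighbourhood. I expect the zero-diagonal structure of the shared block $O^C\cap I^C$ (simple graphs have no self-loops) together with the symmetry of $A_G$ to be the facts that make the rank bookkeeping close.
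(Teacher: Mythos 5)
Your opening reduction is correct, and it is in fact a sharper observation than anything in the paper's (one-line) proof: when $|I|=|O|$, any kernel element $W$ of $M=\smat{A_G}{O^C}{I^C}$ has $Odd(W)\subseteq I$, is therefore internal, and Theorem~\ref{vv} then forces $L(W)\cap I=\emptyset$, hence $Odd(W)=\emptyset$. But the inductive step that you yourself flag as the main obstacle genuinely fails, and with it the strengthened conclusion you are aiming at. After removing an even kernel element $W_0$, a kernel element $W'$ of the reduced matrix only satisfies $Odd_{G\setminus W_0}(W')=\emptyset$, i.e.\ $Odd_G(W')\subseteq W_0$; neither $W'$ nor $W'\cup W_0$ need be even in $G$, and the nullity need not drop. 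Concretely, let $G$ be the disjoint union of a single edge $io$ (with $I=\{i\}$, $O=\{o\}$) and a path on $\{x,z,y\}$ with edges $xz$ and $zy$, where $x,y,z\in I^C\cap O^C$. The only internal sets are $\{x,y\}$ (with $Odd(\{x,y\})=\emptyset$, since $x,y$ are twins) and $\{x,y,z\}$ (with $Odd(\{x,y,z\})=\{x,y\}$), both satisfying $L(W)\cap I=\emptyset$; so $(G,I,O)$ guarantees uniform constant probability and $|I|=|O|=1$. The kernel of $M$ is exactly $\{1_{\{x,y\}}\}$, so your minimal choice is forced: $W_0=\{x,y\}$. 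Removing it isolates $z$, the new kernel is $\{1_{\{z\}}\}$ --- the nullity is still $1$, contradicting your strict-decrease claim --- and your hoped-for check fails: $\{z\}$ is even only in the reduced graph, while $Odd_G(\{z\}\cup W_0)=\{x,y\}\neq\emptyset$. No cleverer choices can repair this: the only $V'$ for which $(G[V'],I,O)$ has a gflow (equivalently, no internal sets) is $V'=V\setminus\{x,y,z\}$, and $Odd_G(\{x,y,z\})=\{x,y\}\neq\emptyset$. So the conclusion in the strict form you are pursuing, $Odd(V\setminus V')=\emptyset$, is simply false in general.

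The resolution is that ``$V\setminus V'$ has no odd neighborhood'' must be read as having empty \emph{external} odd neighborhood, $Odd(V\setminus V')\cap V'=\emptyset$, which is exactly what the paper's proof produces: it inductively removes nonempty sets $W$ with $Odd(W)\cap W^C=\emptyset$ (by Theorem~\ref{vv} these are precisely what internal sets must look like under constant probability, since $L(W)\cap I=\emptyset$ turns $Odd(W)\subseteq W\cup I$ into $Odd(W)\subseteq W$). With this weaker invariant your induction closes along the lines you sketched: if $Odd_G(W_0)\subseteq W_0$ and $W'$ is internal in $G\setminus W_0$, then $Odd_G(W'\cup W_0)=Odd_G(W')\oplus Odd_G(W_0)\subseteq W'\cup W_0\cup I$, so $W'\cup W_0$ is internal in $G$ and Theorem~\ref{vv} transfers constant probability to the reduced graph; the removed sets compose because $Odd(W_1\cup W_2)=Odd(W_1)\oplus Odd(W_2)$ keeps the union's odd neighborhood inside the union; and when no such set remains there are no internal sets at all, so the graph is uniformly equiprobable and, since $|I|=|O|$ (the removed sets avoid $I\cup O$), Theorem~\ref{thmunifgflow} yields the gflow. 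In the counterexample this is precisely the run $\{x,y\}$ then $\{z\}$, ending with $S=\{x,y,z\}$ and $Odd(S)=\{x,y\}\subseteq S$. Your kernel computation is a correct and useful engine for the first removal, but at later stages you must allow $Odd(W)\subseteq W$ rather than $Odd(W)=\emptyset$.
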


\begin{proof}
 Inductively removing the empty neighborhood subsets ($W$ such that $Odd(W) \cap W^C = \emptyset$) leaves an open graph with gflow.
\end{proof}

\begin{theorem}
An open graph $(G,I,O)$ with $|I|=|O|=1$ such that all input/output vertices are of degree 1 guarantees uniformly constant probability if and only if the subgraph obtained by removing the inputs and outputs is Eulerian (every vertex in $I^C \cap O^C$ is of even degree).
\end{theorem}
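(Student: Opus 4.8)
The plan is to route everything through Theorem \ref{vv}: $(G,I,O)$ guarantees uniform constant probability iff it has no strongly internal set. With $I=\{i\}$, $O=\{o\}$ both of degree $1$, write $a$ for the unique neighbour of $i$ and $b$ for that of $o$. A strongly internal set is here a nonempty $W\subseteq V\setminus\{o\}$ with $Odd(W)\setminus W\subseteq\{i\}$ and $i\in W\cup Odd(W)$; and since $o\in Odd(W)\iff b\in W$ and $o\neq i$ cannot be an admissible external odd neighbour, every such $W$ automatically has $o\notin W$ and $b\notin W$. I would first record that the stated hypothesis (every vertex of $I^C\cap O^C$ has even degree in $G$) is equivalent to $d:=\{v:\deg(v)\text{ odd}\}=\{i,o\}$: the degree-$1$ endpoints are forced odd, so ``all internal vertices even'' just removes every other odd vertex. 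For connected $G$ this is precisely the existence of an Eulerian trail from $i$ to $o$, which justifies the name. I will assume $G$ connected; this is needed for the forward implication, since a separate component carrying odd-degree vertices but disjoint from $\{i,o\}$ leaves the evolution constant-probabilistic while violating the degree condition.

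For the direction \emph{Eulerian $\Rightarrow$ constant probability} I would use two elementary $\mathbb F_2$ counting identities, valid for every $W$:
\[ |Odd(W)|\equiv|d\cap W|\pmod 2 \quad\text{and}\quad |Odd(W)\cap W|\equiv 0\pmod 2, \]
the first since both sides equal $\sum_{w\in W}\deg(w)\bmod 2$, the second since $|Odd(W)\cap W|\equiv 2|E(G[W])|\equiv 0$. Assume $d=\{i,o\}$ and suppose a strongly internal $W$ existed. By the second identity together with $Odd(W)\setminus W\subseteq\{i\}$ we get $|Odd(W)|\equiv[\,i\in Odd(W),\,i\notin W\,]$, while the first identity with $o\notin W$ gives $|Odd(W)|\equiv|d\cap W|=[\,i\in W\,]$. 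Comparing these: if $i\in W$ the equality reads $0\equiv 1$, and if $i\notin W$ it forces $i\notin Odd(W)$, contradicting $i\in W\cup Odd(W)$. Hence there is no strongly internal set, and Theorem \ref{vv} gives uniform constant probability.

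For the converse I would argue by contraposition: if some internal vertex has odd degree (so $d\neq\{i,o\}$), I must produce a strongly internal set. I would first isolate two clean cases. If $\deg(b)$ is odd, a direct check shows $W=V\setminus\{i,o,b\}$ is strongly internal: $b$ then has an even number of neighbours in $W$ and so is not an external odd neighbour, $o$ is excluded because $b\notin W$, and $i\in Odd(W)$ since its only neighbour $a$ lies in $W$. Dually, any nonempty \emph{even} set through the input — a $W$ with $Odd(W)=\emptyset$ and $i\in W$ — is itself strongly internal (the case $i\in W$, $Odd(W)\subseteq W$). The remaining task is to show that, for connected $G$, the mere presence of an odd-degree internal vertex always delivers one of these configurations; the intended mechanism is an Eulerian-trail/parity-routing argument that transports the odd-degree defect along a trail to the output neighbour $b$, or closes it into an even set meeting $i$.

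I expect this last step to be the main obstacle. The difficulty is that the strongly-internal condition is genuinely nonlinear over $\mathbb F_2$: the per-vertex clause ``$v\in W$ \emph{or} $v\notin Odd(W)$'' is a disjunction, so one cannot simply solve $A\,1_W=e_i$ — and indeed that linear system can fail to be solvable exactly when an even set meets $i$, which is precisely the situation the second construction handles. Choosing the trail so that the resulting vertex set is truly internal (with $i$ its only admissible external odd neighbour) is where the care is required, and it is also the step that genuinely uses the connectivity of $G$.
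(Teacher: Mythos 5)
Your first direction (Eulerian $\Rightarrow$ no strongly internal set, hence constant probability via Theorem \ref{vv}) is complete and correct: the two parity identities $|Odd(W)|\equiv|d\cap W|$ and $|Odd(W)\cap W|\equiv 0 \pmod 2$, together with the observation that $o\notin W$ and $b\notin W$, do give a clean contradiction in both cases $i\in W$ and $i\notin W$. This is in fact tighter than the paper's own treatment, which disposes of this direction with a one-line ``odd cut'' remark after routing through the $IO$-extension corollary; your case analysis handles the awkward configuration $Odd(W)\subseteq W$, $i\in W$ directly, without the extension machinery. Your observation that connectivity must be assumed is also correct and worth making explicit: a disjoint component $C$ with odd-degree vertices yields internal sets (e.g.\ $W=C$) whose local sets avoid $I$, so constant probability survives while the degree condition fails; the paper never states this hypothesis.

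However, the converse (constant probability $\Rightarrow$ Eulerian) has a genuine gap, which you yourself flag: you prove only the two special configurations ($\deg(b)$ odd, giving $W=V\setminus\{i,o,b\}$; and an even set through the input) and leave the general ``parity-routing'' step as an acknowledged obstacle. This step is precisely the content of the paper's proof, and it is a single short construction rather than a routing argument: take a shortest path $P$ from $o$ to a \emph{nearest} odd-degree vertex, and set $W=V\setminus P$. Shortestness gives chordlessness, so every $v\in P$ has exactly $\deg_P(v)$ neighbours on $P$ and hence $\deg_G(v)-\deg_P(v)$ neighbours in $W$; nearestness forces every interior vertex of $P$ to have even degree in $G$, while both endpoints ($o$ and the odd vertex) have odd degree, so $\deg_G(v)-\deg_P(v)$ is even for all $v\in P$, i.e.\ $Odd(W)\subseteq W$. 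Since $i$ has degree $1$ it cannot be an interior vertex of $P$, and it is not an endpoint, so $i\in W$, giving $L(W)\cap I\neq\emptyset$; thus $W$ is strongly internal and Theorem \ref{vv} yields the contradiction. This one construction subsumes both of your special cases, sidesteps the $\mathbb{F}_2$-nonlinearity you correctly identify as blocking a naive linear-algebraic attack (one never solves $A\,1_W = e_i$), and is exactly where connectivity is used (the path must exist). Without it, your proposal proves only one direction of the theorem.
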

\begin{proof}
 We use the extension (note that it preserves the Eulerian property).
If    $I^C\cap O^C$ is Eulerian then there are no violating sets, as a violating set would define an odd cut (odd number of edges going out from the subset) which is impossible as all the vertices have even degree.

For the other direction, by contradiction: consider a  constant probability graph with one input and one output and suppose that there exists a vertex of odd degree and consider the shortest path $P$  between the output and this vertex.  All the vertices of $P $ have the same degree in $P$ and in $G$ (all even except the borders that are odd), thus $Odd(G\setminus P) \cap (G\setminus P)^C= \emptyset$ which contradicts the hypothesis.
\end{proof}

\section{Open Questions}
This work raises several open questions, from the structural point of view.  For example,  it is not known  whether the uniform constant probability case is reversible when $|I|=|O|$.  From a complexity  perspective: is it possible to derive a polynomial algorithm to characterise the uniform equiprobability class and the uniform constant probability class? 
Is it possible to derive an efficient algorithm for finding inputs and ouputs?

\section*{Acknowledgements} The authors want to  thank E. Kashefi for discussions.  This work is supported by CNRS-JST Strategic French-Japanese Cooperative Program, and Special Coordination Funds for Promoting Science and Technology in Japan.

\bibliographystyle{plain}

\begin{thebibliography}{1}


\bibitem{BKMP}
D.{\,}E.~Browne, E.~Kashefi, M.~Mhalla, and S.~Perdrix.
\newblock \emph{Generalized flow and determinism in measurement-based quantum
  computation}.
\newblock {\sl New J. Phys.} {\bf 9}, 250, 2007.

\bibitem{DKP}
  V.~Danos, E.~Kashefi and P.~Panangaden.
  \newblock \emph{The measurement calculus.}
  \newblock {\sl J. ACM {\bf 54}, 2, 2007.}

\bibitem{DKPP}
V.~Danos, E.~Kashefi, P.~Panangaden and S. Perdrix. 
\newblock Extended measurement calculus.
\newblock {Chapter in {\em Semantic Techniques in quantum Computation}}, {\em Cambridge University Press}, 2010.

\bibitem{got} D.~Gottesman, 
\newblock \emph{Stabilizer Codes and Quantum Error Correction. Ph.D. thesis}, California Institute of Technology, Pasadena, CA, 1997. 


\bibitem{HEB04} M.~Hein, J.~Eisert, and H. J.~Briegel.
\newblock \emph{Multi-party entanglement in graph states.}
\newblock {\sl Phys. Rev. A} {\bf 69}, 062311, 2004. 

 \bibitem{KMMP}
E. ~Kashefi, D.~Markham, M.~Mhalla and S.~Perdrix. 
  \newblock \emph{Information Flow in Secret Sharing Protocols.}
   \newblock {\sl Developments in Computational Models (DCM'09), EPTCS 9, pp. 87-97, 2009}
   
\bibitem{MS08}
D.~Markham and B.~C. Sanders.
\newblock \emph{Graph states for quantum secret sharing.}
\newblock {\sl Phys. Rev. A} {\bf 78}, 042309, 2008.

\bibitem{MP08}
M.~Mhalla, and S.~Perdrix.
\newblock \emph{Finding optimal flows efficiently.}
\newblock {\sl ICALP proceeding Track A}, LNCS, 2008.

\bibitem{RB01}
R.~Raussendorf and H.~Briegel.
\newblock  \emph{A one-way quantum computer.}
\newblock {\sl Phys. Rev. Lett.} {\bf 86}, 5188, 2001.


\bibitem{VdN}
M.~Van den Nest, J.~Dehaene, and B.~De~Moor.
\newblock \emph{Graphical description of the action of local Clifford transformations on graph states. }
\newblock {\sl Phys. Rev. A} {\bf 69}, 22316, 2004.

\bibitem{VdN2}
M.~Van den Nest, A. Miyake, W. D{\"u}r, and H.J. Briegel. 
\newblock \emph{Universal resources for measurement-based quantum computation.}
\newblock {\sl Phys. Rev. Lett.} {\bf 97}, 150504, 2006.


\end{thebibliography}
\small

\end{document}